\title{A flexible class of latent variable models for the analysis of antibody response data}
\author{
Emanuele Giorgi$^{1,2}$ \and
Jonas Wallin$^{3}$ 
}
\newtheorem{theorem}{Theorem}[section]
\newtheorem{lemma}{Lemma}[section]
\newtheorem{assumption}{Assumption}
\newcommand{\E}{\mathbb{E}}
\date{
$^{1}$Department of Applied Health Science, University of Birmingham, Birmingham, UK\\
$^{2}$ Faculty of Health and Medicine, Lancaster University, Lancaster, UK\\
$^{3}$Department of Statistics, Lund University, Lund, Sweden\\
}
\begin{document}
\maketitle

\begin{abstract}
Existing approaches to modelling antibody concentration data are mostly based on finite mixture models that rely on the assumption that individuals can be divided into two distinct groups: seronegative and seropositive. Here, we challenge this dichotomous modelling assumption and propose a latent variable modelling framework in which the immune status of each individual is represented along a continuum of latent seroreactivity, ranging from minimal to strong immune activation. This formulation provides greater flexibility in capturing age-related changes in antibody distributions while preserving the full information content of quantitative measurements. We show that the proposed class of models can accommodate a large variety of model formulations, both mechanistic and regression-based, and also includes finite mixture models as a special case. We also propose a computationally efficient $L_2$-based estimator as an alternative to maximum likelihood estimation, which substantially reduces computational cost, and we establish its consistency. Through a case study on malaria serology, we demonstrate how the flexibility of the novel framework enables joint analyses across all ages while accounting for changes in transmission patterns. We conclude by outlining extensions of the proposed modelling framework and its relevance to other omics applications.
\\ \\
\textbf{Keywords:} age-dependency; antibody dynamics; immunology; latent variable models; mixture models; malaria; serology.
\end{abstract}

\section{Introduction}

Serological data analysis plays a central role in reconstructing individual- and population-level exposure histories to infectious diseases, using information derived from antibody concentration measurements \citep{metcalf2016, corran2007}. Antibody levels can function as biomarkers of prior exposure, enabling the detection of both recent infections and long-term immune responses. In low-prevalence or post-elimination settings, serological data become particularly valuable as they can be used to identify subtle signs of disease resurgence or reintroduction \citep{Drakeley2005, arnold2018}. Consequently, serological surveillance offers a powerful approach for monitoring disease dynamics, especially where traditional diagnostic methods may be insufficiently sensitive due to sporadic infections or predominantly asymptomatic transmission \citep{Drakeley2005}.

Gaussian mixture models (GMMs) are among the most widely used methods for analysing antibody concentration data \citep{fraley2002, arnold2018}. In the serological context, a GMM typically assumes that the antibody concentration $y_i$ for the $i$-th individual arises from a mixture of two Gaussian distributions, representing seronegative and seropositive subpopulations:
\begin{equation}
    \label{eq:gmm}
    f(y) = \pi_0 \mathcal{N}(y ; \mu_0, \sigma_0^2) + \pi_1 \mathcal{N}(y ; \mu_1, \sigma_1^2),
\end{equation}
where $\pi_0 + \pi_1 = 1$, and $(\mu_k,\sigma_k^2)$ denote the mean and variance of each component. This binary specification reflects the assumption that individuals fall into one of two mutually exclusive serological states, with no allowance for intermediate or uncertain exposure histories. Across a wide range of pathogens, such mixture models have become a standard way to operationalise immune activation from serological measurements by separating individuals into low- and high-response groups interpreted as immunologically naïve versus previously exposed or vaccinated. Although GMMs provide a flexible, data-driven approach to defining seropositivity thresholds, alternative strategies are also common in sero-epidemiology \citep{Hay2024Serodynamics}: using manufacturer-defined assay cutoffs; determining thresholds via ROC curves when positive and negative controls are available; basing cutoffs on the distribution of negative controls alone (typically two or three standard deviations above the mean); or, when available, applying immunological correlates of protection to define seroprotection-based thresholds.

In this paper, we challenge this dichotomous assumption by proposing a latent variable framework in which immune status is represented along a continuous scale of seroreactivity. Rather than modeling antibody concentrations as a mixture of outcome distributions, we shift the modelling building process from the outcome space to a latent space. By defining an individual's  antibody distribution conditionally on this continuous latent process, we demonstrate that this formulation provides greater flexibility and more reliably reflects age-related variation in antibody distributions, with standard finite mixture models arising as a special case of the proposed modelling framework.

This approach builds on an extensive body of work applying GMMs to diverse pathogens for which serology provides a reliable marker of past exposure, including directly transmitted childhood infections such as measles, mumps and rubella \citep{Vyse2006}; parasitic and neglected tropical diseases such as onchocerciasis \citep{Golden2016} and trachoma \citep{Migchelsen2017}; mosquito-borne infections such as malaria, where mixture models underpin estimates of age-specific seroprevalence \citep{Sepulveda2015}; and viral infections characterised by substantial asymptomatic transmission such as hepatitis~E \citep{Katuwal2024} and SARS-CoV-2 \citep{Bottomley2021}. These applications highlight that the usefulness of serology strongly depends on the biology of the pathogen. As emphasised by \citet{metcalf2016}, serological surveys are most informative for infections in which antibody responses provide a durable and interpretable record of prior exposure or immunity. This includes infections that generate long-lasting protection, those with measurable antigenic variation, and those in which antibodies reliably reflect cumulative exposure even when infections are predominantly asymptomatic or clinically under-ascertained. The methods presented in this paper are relevant for the analysis of these types of pathogens.

Recent methodological advances have sought to extend GMMs by incorporating covariates, such as age or other demographic factors, which influence antibody concentrations and seropositivity status. For example, \citet{hardelid2008} used maternal age and country-of-birth to model variation in rubella antibody levels among newborns using a finite-mixture regression framework. Age-dependent mixture models are especially common in infectious disease applications: \citet{delfava2016} embedded a three-component mixture within a catalytic model to estimate age-specific immunity to varicella-zoster virus, while \citet{cox2022} used penalised splines to let seropositivity probabilities vary smoothly with age in dengue serology data. Similarly, \citet{kyomuhangi2021} combined antibody acquisition and catalytic models to allow for temporally varying seroconversion and boosting rates. These approaches reflect cumulative exposure risk by allowing mixing proportions or transition probabilities to evolve with age or other covariates. In parallel, efforts have been made to relax the normality assumption within each component. A key extension replaces Gaussian components with skew-normal distributions \citep{azzalini1985}, enabling the model to capture asymmetries commonly observed in antibody concentration data. Skew-normal mixtures retain analytical tractability while offering greater flexibility in representing the shape of seronegative and seropositive subpopulations \citep{dias_domingues2024}.

An alternative line of work has sought to avoid explicit dichotomisation through antibody acquisition (AA) models, in which age-dependent boosting and waning processes generate smooth trajectories of mean antibody levels \citep{yman2016}. Although these models provide a useful mechanistic link between age, exposure, and antibody dynamics, they typically rely on a log-Gaussian assumption and do not adequately accommodate the skewness and multimodality that characterise many serological datasets. Semiparametric methods have also been proposed that use ensemble machine learning to estimate age-antibody curves nonparametrically \citep{arnold2017}, but these target only the mean antibody level, precluding inference on other parameters of epidemiological interest. The proposed framework in this study unifies the strengths of these approaches by accommodating both mechanistic interpretability through AA-type formulations and flexible modeling of the full heterogeneous distribution of antibody levels across age through a latent variable formulation.

A key practical challenge in implementing the proposed framework is the computational burden associated with maximum likelihood estimation, which requires repeated numerical integration. To address this, we develop a computationally efficient alternative based on histogram-based $L_2$ minimum distance estimation that substantially reduces the computational burden. Our work builds on the $L_2$ error theory developed by \citet{scott2001parametric}, which establishes the criteria under which $L_2$ estimators converge to the true parameter values. We extend this framework to our latent variable modelling framework and establish consistency of the resulting estimator under appropriate regularity conditions. 

The remainder of the paper is organised as follows. Section~\ref{sec:framework} introduces the proposed latent variable modelling framework, contrasts it with alternative latent formulations, and presents both single-density and mixture specifications for the latent seroreactivity distribution. Section~\ref{sec:age_dependency} describes how age-dependent structure can be incorporated through mechanistic and data-driven parameterisations of both single Beta distributions and finite mixtures. Section~\ref{sec:inference} details maximum likelihood and histogram-based $L_2$ estimation procedures, establishes the consistency of the $L_2$ estimator, and describes model validation approaches. Section~\ref{sec:simulation} presents two simulation studies: the first compares the statistical and computational efficiency of $L_2$ and maximum likelihood estimation, and the second evaluates the robustness of maximum likelihood estimation under model misspecification considering epidemiologically relevant data-generating mechanisms. Section~\ref{sec:applications} demonstrates the framework through applications to malaria serology data for AMA1 and MSP1 antigens, including age-stratified model comparisons. We conclude in Section~\ref{sec:discussion} with methodological implications and extensions to multivariate settings, spatial-temporal modelling, and high-dimensional omics applications.

The R scripts of the analyses presented in this paper can be freely obtained at \url{github.com/giorgistat/latentv-model-paper}. These can be run on a simulated data-set available from the same link.

\section{The proposed latent variable modelling framework for serological data analysis}
\label{sec:framework}

To motivate our modelling framework, we conceptualise immune activation in the population as a continuum ranging from immunologically quiescent states (no activation) to fully activated responses. Rather than imposing a strict seronegative–seropositive divide, we allow for individuals to occupy intermediate levels of activation that reflect gradual initiation of immune effector mechanisms and their subsequent regulation. We then adopt a latent stochastic process to capture this full continuum of immune activation.

Figure~\ref{fig:prior-models} illustrates the distribution of latent immune activation across the population. We denote by $T_i\in[0,1]$ the latent variable representing the immunological activation state of individual $i$, for $i=1,\ldots,n$. This continuous measure quantifies the underlying immune response to the antigen of interest, with values near 0 corresponding to low-activation states characterised by minimal serological activity, and values near 1 indicating high-activation states reflecting strong or boosted antibody responses.

In this framework, we use the term \emph{seroreactivity} to denote the presence or degree of antibody reactivity, irrespective of any diagnostic threshold, and describe individuals with non-zero serological activity as
\emph{seroreactive}. This concept is distinct from \emph{seropositivity}, which refers to crossing an assay-specific cut-off and therefore yields a binary classification. Seroreactivity is inherently continuous and may be
observed at levels below the positivity threshold, whereas seropositivity represents only the upper portion of this continuum.

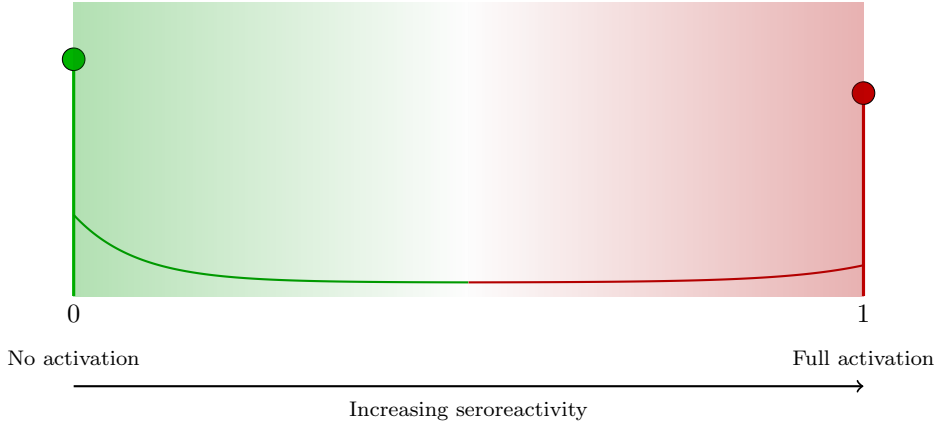
\begin{figure}[ht!]
\centering
\begin{tikzpicture}[scale=1.5]
  \node[below] at (0,0) {0};
  \node[below] at (7,0) {1};
  \node[below] at (0,-0.4) {\footnotesize No activation};
  \node[below] at (7,-0.4) {\footnotesize Full activation};

  \draw[->,thick] (0,-0.8) -- (7,-0.8)
    node[midway,below=2pt] {\footnotesize Increasing seroreactivity};

  \draw[very thick,green!70!black] (0,0) -- (0,2.1);
  \draw[fill=green!70!black] (0,2.1) circle (0.10);

  \draw[very thick,red!75!black] (7,0) -- (7,1.8);
  \draw[fill=red!75!black] (7,1.8) circle (0.10);

  \begin{scope}
    \clip (0,0) rectangle (3.5,2.6);
    \shade[left color=green!60!black, middle color=green!10, right color=gray!5, opacity=0.3]
      (0,0) rectangle (3.5,2.6);
    \draw[thick,green!60!black,domain=0:3.5,samples=120]
          plot(\x,{0.6*exp(-1.8*\x)+0.12});
  \end{scope}

  \begin{scope}
    \clip (3.5,0) rectangle (7,2.6);
    \shade[left color=gray!5, middle color=red!10, right color=red!70!black, opacity=0.3]
      (3.5,0) rectangle (7,2.6);
    \draw[thick,red!70!black,domain=3.5:7,samples=120]
          plot(\x,{0.1529*exp(-1.4*(7-\x))+0.12});
  \end{scope}

\end{tikzpicture}

\caption{Example of a hypothesized distribution of the latent immune activation in the general population. The latent variable $T \in [0,1]$ represents a continuous seroreactivity scale from no activation ($t=0$) to full activation ($t=1$), with most individuals near the extremes and fewer in intermediate activation states.}
\label{fig:prior-models}
\end{figure}

We assume that the observed measurements $y$, representing quantitative antibody concentrations or other continuous immunological biomarkers, are conditionally independent given the latent seroactivity level $T$. Conditional on $T = t$, the data follow a Gaussian distribution whose mean and variance evolve smoothly across the immune-activation continuum:
\begin{equation}
\label{eq:lbg}
Y \mid T = t \sim \mathcal{N}\!\left( (1 - t)\mu_0 + t \mu_1,\; (1 - t)\sigma_0^2 + t \sigma_1^2 \right).    
\end{equation}

In this specification, the parameters $\mu_0$ and $\mu_1$ represent the expected antibody levels at the two extremes of the latent seroactivity scale. The quantity $\mu_0$ corresponds to the mean antibody concentration for individuals with minimal or absent serological activation ($T = 0$), reflecting a baseline or resting immunological state. Conversely, $\mu_1$ corresponds to the mean antibody concentration at the upper end of the detectable seroreactivity scale ($T = 1$). This represents the assay’s saturation range, beyond which further antigenic stimulation does not produce a measurable increase in the observed antibody signal. The variance parameters $\sigma_0^2$ and $\sigma_1^2$ describe the heterogeneity of antibody responses at these two extremes. 

To illustrate the interpretation, let us consider the simplest formulation of antibody acquisition models originally proposed by \citet{yman2016}. According to this model, we define
\begin{align}
\label{eq:antibodySimple}
\E[Y;a] = f(a) = \mu_0 + (\mu_1 - \mu_0)\{1 - \exp(-r a)\},
\end{align}
where $a$ denotes age, $\mu_0$ is the mean antibody level at birth, $\mu_1$ is the asymptotic mean antibody level attained after prolonged exposure, and $r>0$ is the antibody decay rate in the underlying kinetic model. By construction, $f(0) = \mu_0$ and $\lim_{a \to \infty} f(a) = \mu_1$, so that $f(a)$ encodes the basic assumption that mean antibody levels increase with age and approach a saturation plateau at rate $r$, in the sense that the difference $\mu_1 - f(a)$ decays exponentially as age increases. Within the proposed latent-seroreactivity framework, this specification can be reformulated by expressing the mean of the latent variable $T$ as
$$
\E[T;a] = \frac{f(a) - \mu_0}{\mu_1 - \mu_0} = 1 - \exp(-r a),
$$
which increases monotonically with age. The original antibody acquisition model can then be viewed as a special case of our formulation in which the expected antibody concentration at age $a$ is written as
$$
\E[Y;a] = \mu_0 + (\mu_1 - \mu_0) \E[T;a],
$$
so that $\mu_0$ and $\mu_1$ retain their interpretation as the baseline and saturation antibody levels, while $\E[T]$ describes how the expected seroreactivity level increases with age according to the classical antibody acquisition dynamics driven by the decay rate $r$.

The distribution of $T$ should ideally accommodate both discrete and continuous components: a point mass at $T = 0$ to represent individuals that are not seroreactive, a point mass at $T = 1$ to represent individuals who have reached antibody saturation. However, in practice, reliably estimating both the discrete boundary masses and the continuous interior density is often infeasible, especially when working with noisy cross-sectional serological data. To overcome this limitation, we approximate the target distribution of $T$ using a continuous density defined on $[0,1]$ that retains enough flexibility to represent intermediate states while allowing the density to concentrate near the boundaries. In this way, probability mass that would otherwise be explicitly assigned to the points $T = 0$ and $T = 1$ is effectively absorbed into the tails of the continuous distribution, providing a practical approximation that captures both extreme and transitional immune states. In cases where the estimated distribution is highly skewed toward one of the extremes, with most of its mass concentrated near 0 or 1, the model could in principle be reformulated to include an explicit discrete component at that boundary together with a continuous density on the remaining interval; we show an example of this in the case study of Section \ref{sec:ama_analysis}. However, specifying such a mixed discrete–continuous distribution at the outset is generally not advisable, unless there is empirical justification for this modelling choice.

In Section \ref{sec:single_distr} and Section \ref{sec:mix_distr}, we propose two alternative strategies for modelling the latent variable $T$, each aligned with different inferential objectives. The first strategy is a single-density specification in which $T$ is modelled by a continuous distribution on $(0,1)$. This formulation is suitable when the goal is to describe the distribution of antibody concentrations in a flexible and data-driven manner without imposing strong structural assumptions. By allowing $T$ to vary continuously along the seroactivity scale, this approach naturally captures a wide range of antibody profiles and provides a more adaptable description of the data than the standard GMM in \eqref{eq:gmm}, making it particularly useful in exploratory analyses.

The second strategy models $T$ as a finite mixture of components that correspond directly to distinct seroreactivity levels. For example, one component may represent individuals with negligible seroreactivity, another may represent individuals showing high seroreactivity, and a third may capture intermediate levels reflecting early infection, recent seroconversion, waning immunity, or partial boosting.

\subsection{Comparison with alternative latent model formulations}

It is important to clarify how the conditional specification in \eqref{eq:lbg} differs from alternative latent variable formulations that might appear similar but impose different structures on the conditional distribution of $Y$ given $T$ and have fundamentally different interpretations.

One alternative is to model $Y$ as a weighted average of two independent latent outcomes:
\begin{equation}
\label{eq:stoch_mix}
Y \mid T = t \;=\; (1-t) \cdot Y_0 + t \cdot Y_1,
\end{equation}
with $Y_0 \sim \mathcal{N}(\mu_0, \sigma_0^2)$ and $Y_1 \sim \mathcal{N}(\mu_1, \sigma_1^2)$ independent of each other and of $T$. In this formulation, each individual's observed antibody level is conceptualised as a weighted average of two latent outcomes, representing low and high antibody levels, with $T$ determining the relative contribution of each. While both \eqref{eq:stoch_mix} and our specification in \eqref{eq:lbg} yield the same conditional mean $\mathbb{E}(Y \mid T=t) = (1-t)\mu_0 + t\mu_1$, they differ in their variance structure. In the stochastic mixture \eqref{eq:stoch_mix}, the conditional variance is ${\rm Var}(Y \mid T=t) = (1-t)^2 \sigma_0^2 + t^2 \sigma_1^2$, a quadratic function that attains its minimum at the interior point $t^* = \sigma_0^2/(\sigma_0^2 + \sigma_1^2)$, implying that individuals at intermediate values of $T$ exhibit lower variability than those at the extremes. In contrast, our formulation in \eqref{eq:lbg} specifies that both the mean and variance interpolate linearly with $T$. In typical serological settings, we expect $\sigma_1^2 < \sigma_0^2$ because antibody measurements near assay saturation exhibit reduced variability due to ceiling effects. Under this plausible ordering, linear interpolation naturally produces monotonically decreasing variance as $T$ increases from 0 to 1, whereas the stochastic mixture creates an unnatural U-shaped variance profile with minimal variability at intermediate states, which lacks biological justification for immune response data. Moreover, the  introduction of $Y_0$ and $Y_1$ as distinct outcome mechanisms retains the binary classification, whilst in our framework we seek to avoid such discrete serological categories.

Another alternative formulation introduces an additional layer of stochasticity by modelling $Y$ as
\begin{equation}
\label{eq:binary_mix}
Y \mid T = t \;=\; (1-Z_t) \cdot Y_0 + Z_t \cdot Y_1,
\end{equation}
where $Z_t \sim \text{Bernoulli}(t)$ is a latent binary indicator independent of $Y_0 \sim \mathcal{N}(\mu_0, \sigma_0^2)$ and $Y_1 \sim \mathcal{N}(\mu_1, \sigma_1^2)$. In this specification, $T$ governs the probability that an individual belongs to the high-response class, and the observed antibody level is drawn from one of two fixed Gaussian distributions depending on the realised value of $Z_t$. The conditional variance in this case is
$$
{\rm Var}(Y \mid T=t) = (1-t)\sigma_0^2 + t\sigma_1^2 + t(1-t)(\mu_1-\mu_0)^2,
$$
which includes an additional between-component variance term $t(1-t)(\mu_1-\mu_0)^2$ that is maximised at $t=0.5$. This formulation corresponds to a conventional GMM conditional on $t$, and remains fundamentally a binary classification framework in which individuals are assigned probabilistically to one of two discrete serological states. Although this extension might offer greater distributional flexibility than the standard GMM, it preserves the core limitation of partitioning the population into discrete serological states rather than representing immune activation as a continuous biological process. Our goal in this paper is to move away from this dichotomous paradigm entirely. By treating $T$ as a continuous measure of immune activation with smoothly varying distributional parameters, our framework \eqref{eq:lbg} represents immune responses along a continuous biological spectrum, which, as we show in this paper, yields not only improved model fits but also enhances interpretability.

\subsection{Modelling the latent state using a single-density parametric model}
\label{sec:single_distr}

\begin{figure}[ht!]
    \centering
    \includegraphics[width=0.8\linewidth]{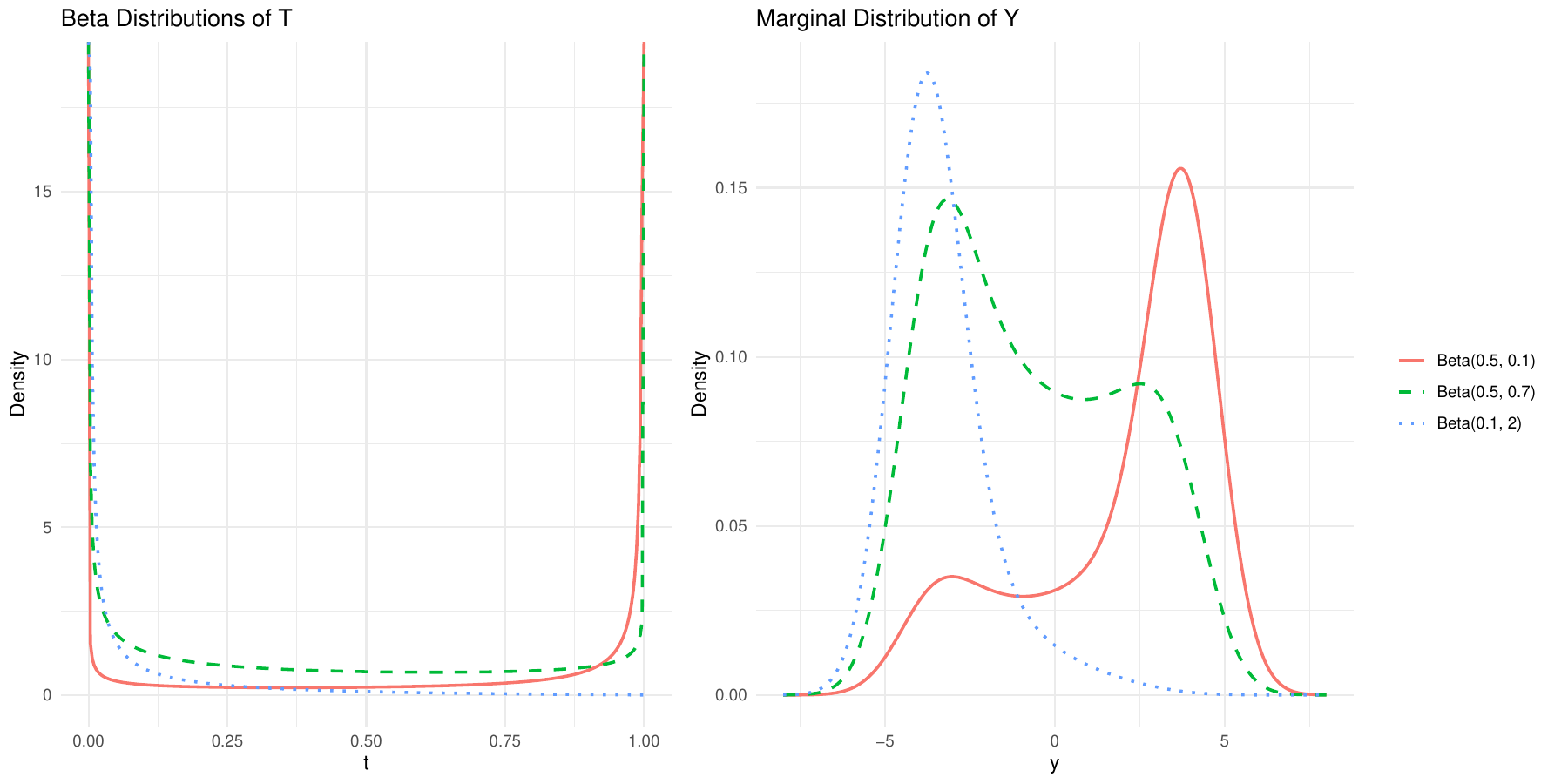}
    \caption{Example of Beta distributions for $T$ and resulting distribution for $Y$ based on the model in \eqref{eq:lbg}. Here, $\mu_0=-4$, $\mu_1=4$ and $\sigma_{0}^2=\sigma_{1}^2=1$. The value of the Beta distribution parameters are defined in the legend.}
    \label{fig:examples_single_beta}
\end{figure}

In the first approach, we define a parametric model for the latent variable $T$ using a single continuous distribution over the interval $(0,1)$. Specifically, we assume $T \sim \text{Beta}(\alpha, \beta)$ with $\E[T] = \alpha/(\alpha+\beta)$, which offers a flexible yet tractable way to represent a wide range of shapes depending on the choice of $\alpha$ and $\beta$. This allows the model to capture both gradual and sharply polarized immune response profiles, while retaining a parsimonious structure that facilitates interpretation and efficient inference. 

Specifically, the parameters $\alpha > 0$ and $\beta > 0$ determine the shape of the distribution. Larger values of $\alpha$ increase concentration near $T = 1$, whereas larger values of $\beta$ shift mass toward $T = 0$. Figure~\ref{fig:examples_single_beta} illustrates how different Beta distributions for $T$ influence the marginal distribution of the outcome $Y_i$ through the latent distribution of $T$. Different combinations of $(\alpha, \beta)$ can be interpreted in terms of epidemiological scenarios, with some configurations being more plausible than others depending on the context.

\begin{itemize}
\item $\alpha, \beta > 1$: The density peaks at intermediate values of $t$, suggesting that most individuals exhibit moderate levels of seroreactivity, with relatively few individuals showing either very low or very high antibody levels. This pattern may reflect transmission settings characterised by frequent but short-lived infections, or fluctuating exposure due to seasonal or intermittent transmission dynamics. However, such a pattern may be biologically implausible for infections that tend to induce long-lasting immunity or follow more polarized exposure patterns.

\item $\alpha < 1, \beta > 1$: The distribution is concentrated near $t = 0$, indicating low prevalence and a predominance of individuals with little or no detectable immune response. This scenario may arise in populations where transmission has been substantially reduced, for instance due to effective control interventions or ecological changes, resulting in limited exposure and predominantly seronegative individuals.

\item $\alpha > 1, \beta < 1$: The distribution places most probability mass near $t = 1$, indicating high prevalence and widespread exposure. This shape is expected in settings with intense and ongoing transmission, where individuals are frequently or continuously exposed, leading to strong and sustained immune responses across the population.

\item $\alpha, \beta < 1$: The distribution is U-shaped, with probability mass concentrated near both $t = 0$ and $t = 1$, suggesting a bimodal population. Such a distribution may emerge in contexts where the population is split between individuals with no prior exposure and those with repeated or recent exposure. This structure can also arise in settings with strongly seasonal or clustered transmission patterns.
\end{itemize}

While a single Beta distribution provides sufficient flexibility to tailor the distribution of $T$ to most epidemiological settings, it is not the only possible choice. Alternative parametric forms such as truncated Gaussian, exponential, or other bounded distributions can be used to obtain density shapes that are not captured by those described above.

\subsection{Modelling the latent state using mixture distributions}
\label{sec:mix_distr}

\begin{figure}[ht!]
    \centering
    \includegraphics[width=0.9\textwidth]{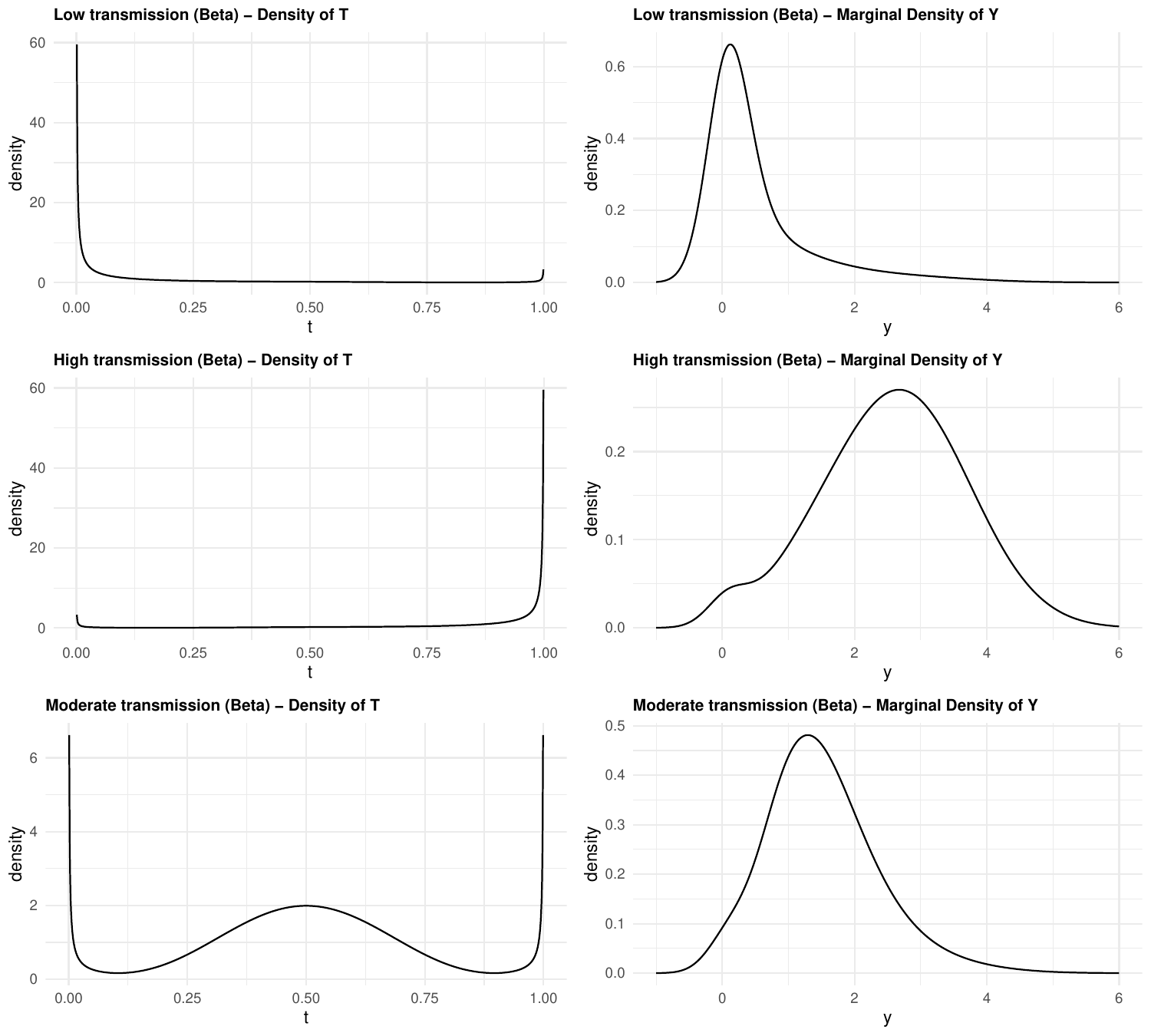}
    \caption{
    Examples of latent-state mixture models using Beta distributions. Each row represents a distinct epidemiological scenario characterised by different levels of disease transmission intensity: low transmission (top panels, predominantly seronegative individuals), high transmission (middle panels, predominantly seropositive individuals), and moderate transmission (bottom panels, predominantly intermediate immune states). The left-hand panels show the density of the latent immune response state $T$, while the right-hand panels depict the induced marginal distributions of the outcome variable $Y$.
    }
    \label{fig:beta_mixtures}
\end{figure}

An alternative to modelling the latent variable $T$ with a single parametric distribution is to assume that the population comprises distinct immunological subgroups, each associated with a different pattern of antibody response. This motivates the use of finite mixture models, in which the overall distribution of $T$ is expressed as a weighted combination of several component densities defined on the interval $(0,1)$. 

We first consider a three component mixture model to illustrate this approach. The first component captures individuals with very low seroreactivity, characterised by minimal detectable immune activation. The second component corresponds to individuals with high seroreactivity, who display strong and sustained antibody responses, typically reflecting substantial or repeated immunological stimulation. The third component represents intermediate levels of seroreactivity, for example individuals with evidence of past antigenic exposure who retain partial immune activation but exhibit moderate antibody concentrations due to waning or the absence of recent boosting. The distribution of $T$ is then given by
\begin{equation}
\label{eq:mixture_general}
f_T(t) = \sum_{c=1}^{3} \pi_c f_c(t_i; \psi_c), \quad t \in (0,1),
\end{equation}
where each $f_c(\cdot; \psi_c)$ is a component density defined on $(0,1)$ with parameters $\psi_c$, and $\pi_c$ is the mixing weight for component $c$, subject to $\sum_{c=1}^3 \pi_c = 1$ and $\pi_c \geq 0$.

For interpretability, one may specify the mixture components as Beta distributions with biologically motivated parameter constraints. For example, a low seroreactivity group can be represented with $\alpha < 1$ and $\beta > 1$, which produces a density concentrated near $t=0$. A high seroreactivity group can be represented with $\alpha > 1$ and $\beta < 1$, yielding a density concentrated near $t=1$. An intermediate seroreactivity group can be represented with $\alpha > 1$ and $\beta > 1$, producing a unimodal distribution centred in the interior of $(0,1)$. Alternatively, truncated Gaussian distributions on $(0,1)$ can be used, with modes positioned near $0$ (low seroreactivity), near $1$ (high seroreactivity), and between $0$ and $1$ (intermediate).

Figure~\ref{fig:beta_mixtures} illustrate the use and interpretation of the three-component mixture model for $T$ across different transmission settings. Each figure shows three contrasting scenarios corresponding to low transmission (dominated by individuals with low seroreactivity), high transmission (dominated by individuals with high seroreactivity), and moderate transmission (dominated by the intermediate group). The left panels display the densities of the latent immune state $T$, and the right panels show the resulting marginal distributions of the observed outcome $Y$, highlighting how assumptions about the latent structure propagate to the data. The shapes produced by a three component mixture model are, as expected, more flexible than those obtained with the single Beta model introduced in the previous section. For instance, the distribution of $T_i$ under a moderate transmission setting (lower panel of Figure~\ref{fig:beta_mixtures}), which shows a pronounced concentration of density around the middle of the interval, cannot be reproduced by a single Beta distribution.

Although conceptually appealing, the three component formulation can be difficult to identify in practice, particularly when the intermediate group overlaps substantially with the low or high seroreactivity groups. Identifiability can sometimes be improved by restricting the parameter space of $\alpha$ and $\beta$ according to the epidemiological context, but this often comes at the cost of flexibility. In Section~\ref{sec:age_mix_prob} we focus on the two component mixture for $T_i$ and extend it by allowing the mixing probabilities to vary with age and enable a more mechanistic representation of how seroreactivity evolves over the life course.

\section{Modelling age dependency}
\label{sec:age_dependency}

Age is a key determinant of antibody dynamics and must be incorporated explicitly into the latent distribution of $T_i$. We describe two complementary approaches. In the first, age is incorporated by allowing the shape parameters of the Beta distribution to vary with age. In the second, age is introduced through the mixing probabilities of a two component mixture model, where the functional form of the age effect can be guided by mechanistic models of antibody acquisition.

\subsection{Single Beta distribution with age-dependent parameters}
\label{sec:age_dependent_shapes}

In the simplest specification, the latent variable $T$ is modelled with a single Beta distribution whose shape parameters $(\alpha, \beta)$ vary with age. This formulation allows the density of $T$ to evolve smoothly across age, capturing the empirical observation that the distinction between individuals with and without an active immune response becomes progressively less pronounced as exposure accumulates in the population.

A convenient starting point is to use power functions of age, which we have found to work reasonably well in practice. Specifically, we write
\begin{align}
\label{eq:age_shapes_model}
    \alpha(a) &= \alpha_0 \, a^{\gamma}, \nonumber \\
    \beta(a)  &= \beta_0 \, a^{\delta},
\end{align}
where $\alpha_0, \beta_0 > 0$ and $\gamma, \delta$ are parameters to be estimated. This parameterization ensures that both the mean and the shape of the distribution adapt to different age groups. Empirically, for many infectious diseases, such as malaria, the proportion of individuals with detectable antibody levels tends to increase with cumulative exposure. However, antibody concentrations may also wane at older ages as immune responses decline or exposure becomes less frequent. Allowing the shape parameters of the Beta distribution to vary with age provides a flexible way to capture these gradual shifts in the antibody distribution, which may continue to evolve throughout life, albeit with diminishing intensity.

An alternative formulation to \eqref{eq:age_shapes_model} is to express the Beta distribution in terms of its \emph{mean--variance parameterization},  where
\begin{equation}
\label{eq:mean_var_model}
    \mathbb{E}(T) = \mu(a), \qquad 
\mathrm{Var}(T) = \frac{\mu(a)\,[1-\mu(a)]}{1+\phi}.
\end{equation}
Here, $\mu(a)\in(0,1)$ denotes the expected antibody level at age $a$, while the precision parameter $\phi>0$ determines how concentrated the distribution is around the mean. The corresponding shape parameters are, in this case, $\alpha(a) = \mu(a)\phi$ and $\beta(a) = [1-\mu(a)]\phi$. Hence, the standard GMM framework here is recovered when $\phi = 0$, as $T$ reduces to a Bernoulli variable such that $P[T = 1; a] = \mu(a)$. 

In this specification, we allow age to affect only the mean $\mu(a)$, while keeping the precision parameter $\phi$ constant across ages. This can result in more parsimonious models than~\eqref{eq:age_shapes_model}, and has the advantage that the effect of age can be interpreted directly on the mean of $T$. A convenient and interpretable choice is to model $\mu(a)$ through a logit-linear function of the log-transformed age:
\begin{equation}
\label{eq:logit-log-reg}
\log\left\{\frac{\xi(a)}{1-\xi(a)}\right\} = \eta_0 + \eta_1 \log(a).     
\end{equation}
We will show the application of this modelling approach in the case study of Section \ref{sec:applications}.

\subsection{Finite mixture models with age dependent mixing probabilities}
\label{sec:age_mix_prob}

An alternative way to introduce age dependence into the latent distribution of $T$ is through the mixing probabilities of a two-component Beta mixture model. In this formulation, the two Beta components represent subpopulations that differ in their propensity to mount an antibody response. The first component corresponds predominantly to individuals with low or absent immune activation, whereas the second represents those with a stronger or more sustained antibody response. This interpretation parallels that of classical finite mixture models directly fitted to the observed antibody measurements $Y$, but here it is formulated at the level of $T$ representing the latent seroreactivity level, hence the model interpretation differs.

To avoid label switching between the two Beta components, we impose the constraints $\alpha_1 < 1$ and $\beta_1 > 1$ for the first component, and $\alpha_2 > 1$ and $\beta_2 < 1$ for the second. This yields densities that are respectively concentrated near $T=0$ and $T=1$, providing a biologically interpretable distinction between the two subpopulations.

Let $\pi(a)$ denote the probability that an individual of age $a$ belongs to the second component related to higher levels of seroreactivity. The distribution of $T$ can then be expressed as
\begin{equation}
f(t ; a) = \bigl(1 - \pi(a)\bigr)\,\text{Beta}(t; \alpha_1,\beta_1)
+ \pi(a)\,\text{Beta}(t; \alpha_2,\beta_2).
\label{eq:age_mixture_lambda}
\end{equation}

Although our model operates on a continuous latent immune state $T$, it is still possible to parameterise the age dependent mixing weight $\pi(a)$ using mechanistic formulations inspired by the literature on analysis of serological data. These parameterisations allow biological processes such as exposure, boosting and waning to be incorporated in a parsimonious and interpretable way. In this latent formulation, $\pi(a)$ does not represent the probability of seroconversion in a binary sense. Instead, it governs the relative contribution of mixture components that place more or less mass on higher values of the continuous immune activation scale and therefore describes how measurable immune activity accumulates in the population with age.

A wide range of mechanistic models for the relation between age and antibody levels has been developed in the context of infectious disease research. These include catalytic models, models that allow repeated boosting through multiple exposures, and models with time varying hazards; a comprehensive overview of these approaches can be found in \citet{Hay2024Serodynamics}. Within our latent mixture framework, these formulations offer a convenient way to specify $\pi(a)$ while retaining epidemiological interpretability. For example, the catalytic framework assumes that the cumulative probability of activation follows
\begin{equation}
\label{eq:rev_cat_lambda}
\pi(a) = 1 - \exp\{-\lambda a\},    
\end{equation}
which arises from a constant hazard $\lambda$ for the onset of measurable immune activity. However, in our context, $\lambda$ should not be viewed as a seroconversion rate. Instead, it determines the speed at which higher levels of seroreactivity accumulate in the population as age increases. In other words, $\lambda$ controls how quickly the upper part of the latent immune state distribution gains weight with age, reflecting gradual changes in seroreactivity rather than transitions between discrete serostatus categories. We would expect the value of $\lambda$ to correlate with classical seroconversion rates, since both quantities reflect underlying exposure processes. Nevertheless, the values of $\lambda$ obtained from our latent mixture formulation are likely to be larger, because individuals can reach high levels of seroreactivity without necessarily becoming seropositive according to a predefined threshold in antibody levels.

\section{Inference}
\label{sec:inference}

Parameter estimation for the latent variable model requires integrating over the continuous distribution of the unobserved immune state $T$, which presents computational challenges for likelihood-based inference. We address this by considering and comparing two approaches: standard maximum likelihood estimation (MLE), and a computationally efficient histogram-based $L_2$ distance minimization method. This can be thought of a discretizations of the $L_2E$ method proposed by \cite{scott2001parametric}. Although our main motivation for using the method is computational gain, the limiting case i.e the $L_2E$ method is also used for it's robusntess \cite{scott2001parametric}. 

The observed data consist of antibody concentrations ${\bf Y}= [y_1,\ldots,y_n]$ measured across individuals. Under the conditional specification in \eqref{eq:lbg}, the marginal density of $Y$ is obtained by integrating over the latent variable distribution:
\begin{equation*}
f(y ; \vartheta)
\;=\;
\int_0^1 
\phi\!\left(y;\; (1-t)\mu_0+t\mu_1,\; (1-t)\sigma_0^2+t\sigma_1^2\right)
\, g_T(t;\psi)\, dt,
\end{equation*}
where $\phi(\cdot;\mu,\sigma^2)$ denotes the Gaussian density, $\vartheta =(\mu_0,\mu_1,\sigma_0,\sigma_1, \psi)$ encompasses all model parameters, and $\psi$ denotes the parameters governing the distribution of $T$ (e.g., Beta shape parameters $\alpha,\beta$). The exact log-likelihood is then
\begin{equation*}
\ell(\vartheta)
\;=\;
\sum_{i=1}^n \log f(y_i ; \vartheta).
\end{equation*}

Direct maximisation of $\ell$ requires repeated evaluation of the integral above, which is computationally expensive. To alleviate this burden, we adopt a histogram-based approximation. Let the observed ${\bf Y}$ values be summarised into a histogram with breakpoints $\{b_j\}_{j=0}^J$, bin widths $\Delta_j = b_j - b_{j-1}$, midpoints $m_j = (b_{j-1}+b_j)/2$, and bin counts $n_j$ with $\sum_{j=1}^J n_j = n$. The empirical density in bin $j$ is
$$
\widehat f_j \;=\; \frac{n_j}{n\,\Delta_j}, \qquad j=1,\ldots,J.
$$
On the model side, the bin probability is
$$
p_j(\vartheta)
\;=\;
\int_{b_{j-1}}^{b_j} f(y ; \vartheta)\,dy,
$$
which we approximate by the midpoint rule,
\begin{equation}
\label{eq:discr_dens}
p_j(\vartheta)
\;\approx\; f(m_j ; \vartheta)\,\Delta_j,
\end{equation}
with approximation error vanishing as $\Delta_j \to 0$.

Estimation is performed by comparing the empirical histogram $\widehat f_j$ with the model-implied densities $f(m_j ; \vartheta,\psi)$. Specifically, we minimise the $L_2$ criterion
\begin{equation}
\label{eq:$L_2$}
Q(\vartheta) 
\;=\; \sum_{j=1}^J  
\Bigl\{ \widehat f_j - f(m_j ; \vartheta) \Bigr\}^2,
\end{equation}
A logarithmic variant can also be defined based on $\widehat f_j \cdot log(f(m_j ; \vartheta,\psi))$, which corresponds to a discretization of the Kullback–Leibler divergence. An advantage of these approaches is that, unlike the exact likelihood, which requires evaluating the marginal distribution $f(y;\vartheta,\psi)$ at every observation, the histogram method requires evaluation only at the $J$ bin midpoints. Since typically $J \ll n$, this can lead to a substantial reduction in the number of integral computations. This is especially convenient when assessing the goodness of fit of several different models and also finding  starting values for MLE.

The theoretical justification for this approach rests on the fact that, as the bin widths $\Delta_j$ shrink, the histogram density $\widehat f_j$ converges to the empirical distribution of the data, while $f(m_j;\vartheta,\psi)$ converges to the model density at those points. In this limiting regime, minimisation of \eqref{eq:$L_2$} defines a minimum–distance estimator \citep{wolfowitz1957,beran1977} that minimises the $L_2$ distance between the empirical and model densities, i.e. the $L_2E$ method. 

We utilize the $L_2$ method for finding the initial estimates and input this as starting value for MLE in the case studies of Section \ref{sec:applications}. To quantify uncertainty in finite samples, we rely on a parametric bootstrap, where replicated datasets are generated under the fitted model and the full estimation procedure is repeated to approximate the sampling distribution of the estimators.

\subsection{Consistency of the histogram-based $L_2$ estimator}
\label{sec:$L_2$-consistency}

We establish the consistency of our estimator using the asymptotic framework of $M$-estimation \citep{vandervaartwellner1996}. The estimator is defined by minimizing the integrated squared error (ISE) between the histogram and the model, a method rooted in the $L_2E$ theory proposed by \cite{scott2001parametric}.

To align with the standard maximization notation of \cite{vandervaartwellner1996}, we define the estimator $\widehat{\vartheta}_n$ as the maximizer of the negative squared distance. Let $\mathbb{M}_n(\vartheta)$ denote the empirical criterion and $\mathbb{M}(\vartheta)$ the population criterion:
\begin{align*}
    \mathbb{M}_n(\vartheta) \;&:=\; - \int_I \Bigl\{\widehat{f}_n(y) - f_{n}(y; \vartheta)\Bigr\}^2 \, dy, \\
    \mathbb{M}(\vartheta) \;&:=\; - \int_I \Bigl\{f_0(y) - f(y; \vartheta)\Bigr\}^2 \, dy.
\end{align*}
Here, $\widehat{f}_n(y)$ is the histogram density estimator and $f_{n}(y; \vartheta)$ is the discretized (binned) model density in \eqref{eq:discr_dens}. The population criterion $\mathbb{M}(\vartheta)$ corresponds to the negative squared $L_2$ distance between the true density $f_0$ and the candidate model density $f(\cdot;\vartheta)$. As discussed by \cite{scott2001parametric}, maximizing $\mathbb{M}(\vartheta)$ is equivalent to minimizing the $L_2$ distance to the true data-generating distribution, which provides inherent robustness properties compared to maximum likelihood.

In our practical implementation, the integral is computed over a finite range covered by bins. For the theoretical proof, we treat the integral over the full support $I$ (which may be finite or infinite).

We first state the needed assumptions.
\begin{assumption}
\label{subsec:$L_2$-assumptions}

Let $\vartheta_0 \in \vartheta$ denote the true parameter such that $f_0(\cdot) = f(\cdot; \vartheta_0)$. We posit the following regularity conditions:

\begin{enumerate}[label=\textbf{A\arabic*.}, leftmargin=3em, itemsep=0.5ex]
    \item \label{asm:compact}
    \textbf{Compactness.} The parameter space $\vartheta$ is a compact subset of $\mathbb{R}^p$.
    
    \item \label{asm:ident}
    \textbf{Identification.} The true parameter $\vartheta_0$ is the unique maximizer of the population criterion $\mathbb{M}$ on $\vartheta$. Furthermore, for every open neighborhood $G$ of $\vartheta_0$, the maximum is well-separated:
    \begin{equation*}
        \mathbb{M}(\vartheta_0) \;>\; \sup_{\vartheta \in \vartheta \setminus G} \mathbb{M}(\vartheta).
    \end{equation*}
    
    \item \label{asm:regularity}
    \textbf{Regularity.} The model family satisfies $\sup_{\vartheta \in \vartheta} \|f(\cdot; \vartheta)\|_2 < \infty$. Additionally, the map $(y, \vartheta) \mapsto f(y; \vartheta)$ is measurable, and the true density $f_0 \in L_2$.
    
    \item \label{asm:histogram}
    \textbf{Histogram Regime.} The histogram partitions $\mathbb{R}$ into bins of equal width $h_n$. As $n \to \infty$, the bin width satisfies:
    \begin{equation*}
        h_n \to 0 \quad \text{and} \quad n h_n \to \infty.
    \end{equation*}
    
    \item \label{asm:smoothness}
    \textbf{Smoothness.} For every $\vartheta \in \vartheta$, the density $y \mapsto f(y;\vartheta)$ is absolutely continuous with first derivative $f'(y;\vartheta)$. We assume the derivative is square-integrable uniformly over the parameter space:
    \begin{equation*}
        \sup_{\vartheta \in \vartheta} \|f'(\cdot; \vartheta)\|_2 \;<\; \infty.
    \end{equation*}
\end{enumerate}
\end{assumption}
Under these assumptions we can establish establish the consistency of the proposed $L_2$ estimator.

\begin{theorem}
\label{lem:L_2-consistency}
Under Assumptions \ref{asm:compact}--\ref{asm:smoothness}, let $\widehat{\vartheta}_n$ be the maximizer of $\mathbb{M}_n$ then $\widehat{\vartheta}_n \xrightarrow{p} \vartheta_0$ as $n \to \infty$.
\end{theorem}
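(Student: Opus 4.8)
The plan is to apply the standard $M$-estimation consistency theorem (e.g. Theorem 5.7 of van der Vaart--Wellner, or its Corollary 3.2.3 in the empirical-process formulation), which requires two ingredients: (i) a uniform convergence statement $\sup_{\vartheta \in \vartheta} |\mathbb{M}_n(\vartheta) - \mathbb{M}(\vartheta)| \xrightarrow{p} 0$, and (ii) the well-separated maximum condition, which is exactly Assumption~\ref{asm:ident}. Given these, consistency of $\widehat{\vartheta}_n$ follows immediately. So the entire work is in establishing the uniform convergence of the criterion functions.

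\medskip
\noindent\textbf{Decomposing the criterion gap.} First I would expand the squared ISE and write, for each $\vartheta$,
\begin{equation*}
\mathbb{M}_n(\vartheta) - \mathbb{M}(\vartheta)
= \underbrace{\Bigl(\|f_0\|_2^2 - \|\widehat f_n\|_2^2\Bigr)}_{\text{(I), free of }\vartheta}
+ 2\underbrace{\Bigl(\langle \widehat f_n, f_n(\cdot;\vartheta)\rangle - \langle f_0, f(\cdot;\vartheta)\rangle\Bigr)}_{\text{(II)}}
+ \underbrace{\Bigl(\|f(\cdot;\vartheta)\|_2^2 - \|f_n(\cdot;\vartheta)\|_2^2\Bigr)}_{\text{(III)}}.
\end{equation*}
Term (I) does not depend on $\vartheta$ and converges to $0$ in probability because $\|\widehat f_n - f_0\|_2 \to 0$ (the standard $L_2$ consistency of the histogram density estimator under Assumption~\ref{asm:histogram}, using that $f_0$ is $L_2$ and, via Assumption~\ref{asm:smoothness} with $\vartheta_0$, has a square-integrable derivative so the bias term is $O(h_n)$ in $L_2$ and the variance term is $O(1/(nh_n))$). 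Term (III) is handled \emph{uniformly} by the Riemann-sum bound: for an absolutely continuous $g$ with $g' \in L_2$, the difference between $g$ and its bin-midpoint step function $g_n$ satisfies $\|g - g_n\|_2 \le C\, h_n \|g'\|_2$, and $\|g_n\|_2 \le \|g\|_2$; combining with the uniform bounds $\sup_\vartheta \|f(\cdot;\vartheta)\|_2 < \infty$ (Assumption~\ref{asm:regularity}) and $\sup_\vartheta \|f'(\cdot;\vartheta)\|_2 < \infty$ (Assumption~\ref{asm:smoothness}) gives $\sup_\vartheta \bigl|\,\|f(\cdot;\vartheta)\|_2^2 - \|f_n(\cdot;\vartheta)\|_2^2\bigr| \le 2\bigl(\sup_\vartheta\|f\|_2\bigr) C h_n \bigl(\sup_\vartheta\|f'\|_2\bigr) \to 0$. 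The cross term (II) is controlled by Cauchy--Schwarz: $|\langle \widehat f_n, f_n(\cdot;\vartheta)\rangle - \langle f_0, f(\cdot;\vartheta)\rangle| \le \|\widehat f_n - f_0\|_2 \sup_\vartheta\|f_n(\cdot;\vartheta)\|_2 + \|f_0\|_2 \sup_\vartheta\|f_n(\cdot;\vartheta) - f(\cdot;\vartheta)\|_2$, and both factors are controlled uniformly in $\vartheta$ by the same estimates as above, so (II) $\to 0$ in probability uniformly in $\vartheta$.

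\medskip
\noindent\textbf{Assembling the argument.} Putting the three pieces together yields $\sup_{\vartheta\in\vartheta}|\mathbb{M}_n(\vartheta)-\mathbb{M}(\vartheta)| \xrightarrow{p} 0$. Together with the well-separation Assumption~\ref{asm:ident}, the compactness Assumption~\ref{asm:compact}, and the fact that $\widehat\vartheta_n$ (nearly) maximizes $\mathbb{M}_n$, Wald-type consistency gives $\mathbb{M}(\widehat\vartheta_n) \to \mathbb{M}(\vartheta_0)$ and hence $\widehat\vartheta_n \xrightarrow{p} \vartheta_0$, which is the claim. (One can phrase the final step directly: for any neighborhood $G$ of $\vartheta_0$, on the event $\{\sup_\vartheta|\mathbb{M}_n-\mathbb{M}| < \varepsilon/2\}$ with $\varepsilon := \mathbb{M}(\vartheta_0) - \sup_{\vartheta\notin G}\mathbb{M}(\vartheta) > 0$, the maximizer cannot lie outside $G$; the probability of this event tends to $1$.)

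\medskip
\noindent\textbf{Main obstacle.} The routine applications of Cauchy--Schwarz and the triangle inequality are straightforward; the genuine technical heart is the pair of quantitative approximation bounds — namely the $L_2$ consistency $\|\widehat f_n - f_0\|_2 \xrightarrow{p} 0$ of the equal-width histogram (bias--variance tradeoff under $h_n\to 0$, $nh_n\to\infty$) and the \emph{uniform-in-$\vartheta$} Riemann/midpoint error bound $\|f(\cdot;\vartheta) - f_n(\cdot;\vartheta)\|_2 \le C h_n \|f'(\cdot;\vartheta)\|_2$. The latter is where Assumption~\ref{asm:smoothness} does its work, and care is needed because the support $I$ may be unbounded: one must verify that the midpoint-rule $L_2$ error over all of $\mathbb{R}$ (not just a compact window) is still $O(h_n)$, which follows from absolute continuity plus square-integrability of $f'$ by a summation of the per-bin local errors and a Cauchy--Schwarz step, but should be written out carefully rather than asserted.
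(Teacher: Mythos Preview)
Your proof is correct and follows essentially the same approach as the paper: both invoke the van der Vaart--Wellner $M$-estimation consistency corollary, reduce everything to the uniform convergence $\sup_\vartheta|\mathbb{M}_n-\mathbb{M}|\xrightarrow{p}0$, and control this via the same two ingredients --- histogram $L_2$ consistency $\|\widehat f_n-f_0\|_2\xrightarrow{p}0$ (Freedman--Diaconis) and the uniform binning-error bound $\sup_\vartheta\|f_n(\cdot;\vartheta)-f(\cdot;\vartheta)\|_2=O(h_n)$ from Assumption~\ref{asm:smoothness}. The only difference is cosmetic: the paper inserts the intermediate term $\|f_0-f_{\vartheta,n}\|_2^2$ and twice applies $\bigl|\|a\|_2^2-\|b\|_2^2\bigr|\le\|a-b\|_2(\|a\|_2+\|b\|_2)$, whereas you fully expand both squared norms into your terms (I)$+$(II)$+$(III); the resulting bounds are the same.
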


Proof of the above theorem is given in Section  \ref{sec:proof-theorem} of the Appendix. In Section \ref{subsec:regularity-verification} of the Appendix we prove that the set of Assumptions \ref{asm:compact}--\ref{asm:smoothness} hold for the proposed latent variable model in \eqref{eq:lbg} under mild conditions on the mixing distribution $g_T$.

\subsection{Model validation}
\label{sec:validation}

Model validation is based on graphical comparison of the empirical 
distribution of the data with replicate datasets simulated from the fitted model. In the simplest case, where age is not included as a covariate affecting either the Beta distribution of the latent variable $T_i$ or the mixing probabilities, this can be done by directly comparing the empirical histogram of the observed data with the histogram implied by the fitted model.  

When age is included, additional care is needed because the parameters of the latent distribution vary smoothly with age. In this situation, a single global comparison between the empirical histogram and the fitted marginal distribution of $Y$ is not informative, since any apparent lack of fit may arise from discrepancies at specific ages rather than from a systematic problem across all ages. To address this, we validate the model by constructing replicate datasets that preserve the observed age structure of the sample. Conceptually, this means simulating outcomes conditional on the observed ages and then assessing model adequacy by comparing histograms stratified into age groups. More specifically, the procedure can be summarised as follows.

\begin{enumerate}
    \item For each observed individual (or within each narrow age band), compute the corresponding age-specific parameters of the latent distribution of $T$ using the fitted model.  

    \item Draw a value of $T$ from the corresponding age-specific distribution.  

    \item Conditional on $T$, generate an antibody concentration $Y$ from the Gaussian distribution in \eqref{eq:lbg}.  

    \item Repeat steps 1–3 for all individuals to obtain a single replicate dataset under the fitted model.  

    \item Repeat the entire process many times to generate a collection of replicate datasets.  

    \item For each replicate, compute the empirical histogram of the simulated antibody concentrations, either overall or within age groups.  
\end{enumerate}

Comparison is then made between the empirical histogram of the observed data and the collection of histograms obtained from the simulated datasets. An envelope can be formed from the simulations (for example, by plotting the pointwise range across replicates), providing a visual check of whether the observed histogram is consistent with the fitted model. If the observed histogram lies within the simulated envelope, this indicates that the model 
provides an adequate representation of the observed data.

\section{Simulation studies}
\label{sec:simulation}

\subsection{Assessment of the $L_2$ estimator performance}
\label{subsec:sim_study_L2}

To evaluate the performance of the $L_2$ histogram-based estimator relative to MLE, we conducted a comprehensive simulation study under different scenarios that mimic specific epidemiological settings. The simulation study aims to compare the two methods in terms of both statistical efficiency (bias and root mean squared error) and computational efficiency (computation time).

For each simulation replicate, we generated data from the latent Beta  model specified in \eqref{eq:lbg}. Specifically, for each individual $i = 1, \ldots, n$, we sampled the latent immune state from a Beta distribution, $T_i \sim \text{Beta}(\alpha, \beta)$, and then, conditional on $T_i = t_i$, sampled the observed antibody concentration from
the Gaussian distribution of $Y_i \mid T_i = t_i$.

We considered four scenarios representing distinct epidemiological settings, chosen to reflect the range of serological patterns commonly observed in practice. Each scenario is defined by a specific combination of parameters $(\mu_0, \mu_1, \sigma_0, \sigma_1, \alpha, \beta)$, with the Beta shape parameters $(\alpha, \beta)$ determining the distribution of latent immune states in the population. 

\begin{itemize}
    \item The first scenario, \emph{bimodal} (BM) ($\alpha = 0.5$, $\beta = 0.5$), uses a strongly U-shaped Beta distribution with equal probability mass near both extremes, corresponding to highly polarized populations with clear separation between seronegative and seropositive groups, with $\mathbb{E}[T] = 0.5$.
    
    \item The second scenario, \emph{high transmission} (HT) ($\alpha = 3.0$, $\beta = 0.5$), uses a U-shaped Beta distribution concentrated near $T = 1$, characteristic of endemic transmission where repeated exposure is common; most individuals show high seroreactivity, with $\mathbb{E}[T] = 0.86$. 
    
    \item The third scenario, \emph{intermediate transmission} (IT) ($\alpha = 2.0$, $\beta = 2.0$), employs a symmetric unimodal Beta distribution with $\mathbb{E}[T] = 0.5$, representing populations with balanced exposure patterns and capturing settings with moderate, stable transmission or heterogeneous exposure across subgroups. 
    
    \item The fourth scenario, \emph{low transmission} (LT) ($\alpha = 0.5$, $\beta = 3.0$), uses a U-shaped Beta distribution concentrated near $T = 0$, representing populations with minimal ongoing pathogen exposure, such as post-elimination settings or areas with effective vector control. Under this configuration, the majority of individuals exhibit low seroreactivity, with $\mathbb{E}[T] = 0.14$. 
\end{itemize}

For all scenarios, we fixed the conditional distribution parameters at $\mu_0 = -3.0$, $\mu_1 = 1.0$, $\sigma_0 = 0.8$, and $\sigma_1 = 0.3$, except for the bimodal scenario where we used $\mu_0 = -3.5$, $\mu_1 = 1.5$, $\sigma_0 = 0.7$, and $\sigma_1 = 0.2$ to increase the separation between low and high seroreactivity states. These parameter values were chosen to reflect realistic antibody concentration ranges on the log scale, with $\mu_1 - \mu_0 = 4$ (or 4.5 for the bimodal scenario) providing sufficient separation to distinguish the two extremes while $\sigma_1 < \sigma_0$ captures the reduced variability typically observed near assay saturation.

We evaluated performance across four sample sizes: $n \in \{100, 500, 1000, 5000\}$. This range enables us to assess how the relative performance of $L_2$ and MLE changes with increasing information. For each combination of scenario and sample size, we generated 1,000 independent replicates. For each simulated dataset, we applied both the $L_2$ histogram-based estimator and MLE, recording parameter estimates and computation times for both methods. We construct histograms using Sturges' rule to determine the number of bins: $k = \lceil \log_2(n) + 1 \rceil$. This binning rule yields between 8 bins ($n = 100$) and 14 bins ($n = 5000$).

For the estimates of each of the model parameters performance was evaluated using the median bias and the median absolute error. To assess the computational efficiency, we recorded the elapsed wall-clock time
(in seconds) required to obtain parameter estimates for each method on every simulation replicate.

\subsubsection{Results}
\label{sec:sim_results}

\begin{table}[ht!]
\centering
\footnotesize
\caption{Median bias of parameter estimates across scenarios and sample sizes.}
\label{tab:bias_all}
\begin{tabular}{llrrrrrrrr}
\toprule
Scenario & Parameter & \multicolumn{2}{c}{$n=100$} & \multicolumn{2}{c}{$n=500$} & \multicolumn{2}{c}{$n=1000$} & \multicolumn{2}{c}{$n=5000$} \\
& & $L_2$ & MLE & $L_2$ & MLE & $L_2$ & MLE & $L_2$ & MLE \\
\midrule
BM & $\alpha$ & $-0.105$ & $-0.054$ & $-0.026$ & $-0.008$ & $-0.023$ & $-0.001$ & $-0.005$ & $-0.000$ \\
 & $\beta$ & $-0.073$ & $-0.038$ & $-0.028$ & $-0.006$ & $-0.025$ & $-0.001$ & $-0.007$ & $-0.002$ \\
 & $\mu_0$ & $0.112$ & $0.064$ & $0.055$ & $0.014$ & $0.038$ & $-0.002$ & $0.006$ & $-0.001$ \\
 & $\mu_1$ & $-0.118$ & $-0.022$ & $-0.028$ & $-0.002$ & $-0.017$ & $-0.001$ & $-0.001$ & $-0.000$ \\
 & $\sigma_0$ & $0.032$ & $-0.044$ & $0.036$ & $-0.004$ & $0.031$ & $-0.009$ & $0.015$ & $-0.001$ \\
 & $\sigma_1$ & $0.060$ & $-0.023$ & $0.084$ & $-0.004$ & $0.076$ & $-0.003$ & $0.061$ & $-0.000$ \\
\midrule
HT & $\alpha$ & $-2.105$ & $-2.026$ & $-1.320$ & $-0.494$ & $-0.742$ & $-0.113$ & $-0.027$ & $0.269$ \\
 & $\beta$ & $-0.286$ & $-0.214$ & $-0.109$ & $-0.071$ & $-0.020$ & $-0.036$ & $0.004$ & $0.012$ \\
 & $\mu_0$ & $1.539$ & $0.995$ & $1.062$ & $0.109$ & $0.806$ & $-0.198$ & $0.120$ & $-0.295$ \\
 & $\mu_1$ & $-0.052$ & $-0.055$ & $-0.053$ & $-0.019$ & $-0.036$ & $-0.005$ & $-0.009$ & $0.011$ \\
 & $\sigma_0$ & $-0.360$ & $-0.579$ & $0.090$ & $-0.115$ & $0.142$ & $-0.164$ & $0.086$ & $-0.061$ \\
 & $\sigma_1$ & $0.013$ & $-0.015$ & $0.022$ & $-0.005$ & $0.016$ & $-0.002$ & $0.019$ & $-0.003$ \\
\midrule
IT & $\alpha$ & $-1.826$ & $-1.751$ & $-1.766$ & $-1.534$ & $-1.660$ & $-1.174$ & $-0.850$ & $-0.239$ \\
 & $\beta$ & $-1.705$ & $-1.523$ & $-1.612$ & $-1.124$ & $-1.322$ & $-0.696$ & $-0.469$ & $-0.071$ \\
 & $\mu_0$ & $1.335$ & $1.325$ & $1.174$ & $1.043$ & $1.101$ & $0.873$ & $0.685$ & $0.167$ \\
 & $\mu_1$ & $-0.747$ & $-0.651$ & $-0.698$ & $-0.396$ & $-0.581$ & $-0.283$ & $-0.234$ & $-0.029$ \\
 & $\sigma_0$ & $0.037$ & $0.028$ & $0.132$ & $0.081$ & $0.124$ & $0.072$ & $0.094$ & $0.026$ \\
 & $\sigma_1$ & $-0.036$ & $-0.108$ & $-0.018$ & $-0.131$ & $0.062$ & $-0.059$ & $0.034$ & $-0.022$ \\
\midrule
LT & $\alpha$ & $-0.444$ & $-0.443$ & $-0.442$ & $-0.411$ & $-0.422$ & $-0.328$ & $-0.306$ & $-0.224$ \\
 & $\beta$ & $-2.797$ & $-2.605$ & $-2.828$ & $-2.531$ & $-2.730$ & $-2.347$ & $-2.274$ & $-1.794$ \\
 & $\mu_0$ & $0.232$ & $0.315$ & $0.283$ & $0.291$ & $0.269$ & $0.218$ & $0.194$ & $0.143$ \\
 & $\mu_1$ & $-1.777$ & $-1.569$ & $-1.681$ & $-1.426$ & $-1.481$ & $-1.304$ & $-1.403$ & $-1.218$ \\
 & $\sigma_0$ & $-0.026$ & $0.008$ & $0.025$ & $0.027$ & $0.030$ & $0.022$ & $0.030$ & $0.015$ \\
 & $\sigma_1$ & $-0.030$ & $-0.110$ & $0.204$ & $-0.009$ & $0.259$ & $0.137$ & $0.364$ & $0.259$ \\
\bottomrule
\end{tabular}
\end{table}

\begin{table}[ht!]
\centering
\footnotesize
\caption{Median Absolute Error (MAE) of parameter estimates across scenarios and sample sizes.}
\label{tab:mae_all}
\begin{tabular}{llrrrrrrrr}
\toprule
Scenario & Parameter & \multicolumn{2}{c}{$n=100$} & \multicolumn{2}{c}{$n=500$} & \multicolumn{2}{c}{$n=1000$} & \multicolumn{2}{c}{$n=5000$} \\
& & $L_2$ & MLE & $L_2$ & MLE & $L_2$ & MLE & $L_2$ & MLE \\
\midrule
BM & $\alpha$ & $0.105$ & $0.054$ & $0.026$ & $0.008$ & $0.023$ & $0.001$ & $0.005$ & $0.000$ \\
 & $\beta$ & $0.073$ & $0.038$ & $0.028$ & $0.006$ & $0.025$ & $0.001$ & $0.007$ & $0.002$ \\
 & $\mu_0$ & $0.112$ & $0.064$ & $0.055$ & $0.014$ & $0.038$ & $0.002$ & $0.006$ & $0.001$ \\
 & $\mu_1$ & $0.118$ & $0.022$ & $0.028$ & $0.002$ & $0.017$ & $0.001$ & $0.001$ & $0.000$ \\
 & $\sigma_0$ & $0.032$ & $0.044$ & $0.036$ & $0.004$ & $0.031$ & $0.009$ & $0.015$ & $0.001$ \\
 & $\sigma_1$ & $0.060$ & $0.023$ & $0.084$ & $0.004$ & $0.076$ & $0.003$ & $0.061$ & $0.000$ \\
\midrule
HT & $\alpha$ & $2.105$ & $2.026$ & $1.320$ & $0.494$ & $0.742$ & $0.113$ & $0.027$ & $0.269$ \\
 & $\beta$ & $0.286$ & $0.214$ & $0.109$ & $0.071$ & $0.020$ & $0.036$ & $0.004$ & $0.012$ \\
 & $\mu_0$ & $1.539$ & $0.995$ & $1.062$ & $0.109$ & $0.806$ & $0.198$ & $0.120$ & $0.295$ \\
 & $\mu_1$ & $0.052$ & $0.055$ & $0.053$ & $0.019$ & $0.036$ & $0.005$ & $0.009$ & $0.011$ \\
 & $\sigma_0$ & $0.360$ & $0.579$ & $0.090$ & $0.115$ & $0.142$ & $0.164$ & $0.086$ & $0.061$ \\
 & $\sigma_1$ & $0.013$ & $0.015$ & $0.022$ & $0.005$ & $0.016$ & $0.002$ & $0.019$ & $0.003$ \\
\midrule
IT & $\alpha$ & $1.826$ & $1.751$ & $1.766$ & $1.534$ & $1.660$ & $1.174$ & $0.850$ & $0.239$ \\
 & $\beta$ & $1.705$ & $1.523$ & $1.612$ & $1.124$ & $1.322$ & $0.696$ & $0.469$ & $0.071$ \\
 & $\mu_0$ & $1.335$ & $1.325$ & $1.174$ & $1.043$ & $1.101$ & $0.873$ & $0.685$ & $0.167$ \\
 & $\mu_1$ & $0.747$ & $0.651$ & $0.698$ & $0.396$ & $0.581$ & $0.283$ & $0.234$ & $0.029$ \\
 & $\sigma_0$ & $0.037$ & $0.028$ & $0.132$ & $0.081$ & $0.124$ & $0.072$ & $0.094$ & $0.026$ \\
 & $\sigma_1$ & $0.036$ & $0.108$ & $0.018$ & $0.131$ & $0.062$ & $0.059$ & $0.034$ & $0.022$ \\
\midrule
LT & $\alpha$ & $0.444$ & $0.443$ & $0.442$ & $0.411$ & $0.422$ & $0.328$ & $0.306$ & $0.224$ \\
 & $\beta$ & $2.797$ & $2.605$ & $2.828$ & $2.531$ & $2.730$ & $2.347$ & $2.274$ & $1.794$ \\
 & $\mu_0$ & $0.232$ & $0.315$ & $0.283$ & $0.291$ & $0.269$ & $0.218$ & $0.194$ & $0.143$ \\
 & $\mu_1$ & $1.777$ & $1.569$ & $1.681$ & $1.426$ & $1.481$ & $1.304$ & $1.403$ & $1.218$ \\
 & $\sigma_0$ & $0.026$ & $0.008$ & $0.025$ & $0.027$ & $0.030$ & $0.022$ & $0.030$ & $0.015$ \\
 & $\sigma_1$ & $0.030$ & $0.110$ & $0.204$ & $0.009$ & $0.259$ & $0.137$ & $0.364$ & $0.259$ \\
\bottomrule
\end{tabular}
\end{table}

\begin{table}[ht!]
\centering
\small
\caption{Mean computation time (seconds) for $L_2$ and MLE estimation across scenarios and sample sizes.}
\label{tab:time_all}
\begin{tabular}{lrrrrrrrr}
\toprule
Scenario & \multicolumn{2}{c}{$n=100$} & \multicolumn{2}{c}{$n=500$} & \multicolumn{2}{c}{$n=1000$} & \multicolumn{2}{c}{$n=5000$} \\
 & $L_2$ & MLE & $L_2$ & MLE & $L_2$ & MLE & $L_2$ & MLE \\
\midrule
BM & 0.780 & 6.690 & 0.640 & 21.590 & 0.660 & 38.150 & 0.790 & 177.270 \\
HT & 1.510 & 14.320 & 1.690 & 68.750 & 1.630 & 117.780 & 1.540 & 396.950 \\
IT & 1.500 & 11.800 & 2.050 & 63.610 & 2.020 & 120.610 & 1.770 & 451.080 \\
LT & 1.780 & 12.130 & 2.540 & 66.130 & 2.810 & 133.630 & 3.590 & 687.420 \\
\bottomrule
\end{tabular}
\end{table}

Tables~\ref{tab:bias_all}, \ref{tab:mae_all}, and \ref{tab:time_all} present the simulation results for median bias, MAE, and computation time, respectively. 

The BM scenario exhibits the best overall performance for both estimation methods, with median bias and MAE approaching zero even at moderate sample sizes. At $n=1000$, both $L_2$ and MLE achieve negligible bias and MAE (all below 0.03) for all parameters. The HT scenario shows large biases at small sample sizes but rapid improvement with increasing $n$, with MLE reaching near-zero bias by $n=5000$ while $L_2$ converges more slowly. In contrast, the IT and LT scenarios prove substantially more challenging. For IT, both methods exhibit persistent negative bias in $\alpha$ and $\beta$ even at $n=5000$ (e.g., $L_2$: $-0.850$ and $-0.469$; MLE: $-0.239$ and $-0.071$). The LT scenario emerges as the most challenging, with median bias in $\beta$ exceeding 1.7 in absolute value for both methods even at the largest sample size. These difficulties arise because symmetric (IT) and highly skewed (LT) Beta distributions are harder to distinguish from the marginal distribution of $Y$, leading to weaker empirical identifiability.

Examining performance across parameters reveals distinct patterns in the relative performance of $L_2$ versus MLE. For the Beta shape parameters $\alpha$ and $\beta$, MLE consistently outperforms $L_2$ in terms of MAE, particularly in the HT and IT scenarios where MLE achieves 2 to 7 times lower MAE at $n=5000$. For the conditional mean parameters $\mu_0$ and $\mu_1$, the two methods perform more comparably in the BM scenario, while MLE shows clear advantages in the more challenging scenarios, especially for $\mu_1$ where the relative MAE difference can exceed an order of magnitude at large sample sizes (e.g., IT scenario at $n=5000$: 0.234 vs 0.029). The variance parameters $\sigma_0$ and $\sigma_1$  exhibit mixed results, with $L_2$ occasionally showing smaller bias than MLE in the BM scenario, but MLE generally achieving lower MAE in more challenging scenarios. Overall, MLE demonstrates asymptotic superiority across all parameters, though the practical advantage varies considerably by scenario and sample size, with the gap being smallest for BM and largest for IT and LT.

The computational efficiency advantage of the $L_2$ method is substantial and increases dramatically with sample size (Table~\ref{tab:time_all}). At $n=100$, $L_2$ is approximately 7--10 times faster than MLE across scenarios. This speedup factor increases markedly with sample size, reaching 26--41-fold at $n=500$, 48--72-fold at $n=1000$, and 192--258-fold at $n=5000$. Even in the most challenging LT scenario, $L_2$ estimation at $n=5000$ requires only about 3.6 seconds, on average, compared to 687 seconds for MLE—a 191-fold speedup that becomes critical when fitting models repeatedly during exploratory analysis and bootstrap inference.

\subsection{Assessment of the impact of model misspesification}
\label{subsec:sim_study_miss}

In this second simulation study, we assess the robustness of maximum likelihood estimation when the data-generating mechanism departs from the fitted model specification. Specifically, we consider three scenarios in which subpopulations experience different patterns of pathogen exposure that are not accounted for in the standard age-dependent structure.

The fitted model in all scenarios is the latent variable model with mean-variance parameterization of the latent Beta distribution as given in equation~\eqref{eq:mean_var_model}, where the mean $\mu(a)$ depends on age through the logit-linear specification in~\eqref{eq:logit-log-reg}. The true parameters governing the baseline age-dependent distribution are $\eta_0 = -2.0$, $\eta_1 = 0.4$, and $\phi = 3.0$ for the distribution of $T$, and $\mu_0 = -3.5$, $\mu_1 = 1.0$, $\sigma_0 = 0.8$, and $\sigma_1 = 0.3$ for the conditional distribution of $Y$ given $T$. Rather than sampling directly from this model, we introduce structured heterogeneity representing plausible departures from the assumed age-dependent dynamics. For each scenario, we consider sample sizes $n \in \{1000, 2000, 5000\}$ and generate 2000 simulated data-sets.

The first scenario represents populations in which a proportion of individuals have received interventions that directly elevate antibody levels through mechanisms distinct from natural infection. Examples include prophylactic antibody therapy or vaccines that produce measurable serological responses that would be reflected in elevated values of the latent ser-reactivity state. Hence, we simulate under this scenario by first sampling each individual's baseline latent state $T_i$ from the age-dependent Beta distribution, then adding a stochastic boost $\Delta_i \sim \text{Beta}(3,3)$ to a randomly selected subset representing vaccinated individuals. The boosted latent state is given by $T_i^{*} = \min(1, T_i + \Delta_i)$, where the truncation at unity reflects the upper bound of the latent scale. We shall use the parameter $\rho \in \{0.2, 0.5, 0.8\}$ to denote the coverage of the intervention, assumed to be uniform across all age groups.

The second scenario represents age-targeted public health interventions that indirectly reduce pathogen exposure for specific age groups. Examples include school-based health programs, nutritional supplementation initiatives, or targeted vector control efforts that primarily benefit children of school age. Rather than directly boosting antibody levels, these interventions reduce the force of infection, thereby dampening the age-dependent accumulation of seroreactivity. We implement this by modifying the expected latent seroreactivity for individuals aged between 5 and 15 years according to 
\begin{equation*}
\mu^*(a) = \frac{\mu(a) \exp(-\delta)}{\mu(a) \exp(-\delta) + 1 - \mu(a)},
\end{equation*}
where $\mu(a)$ denotes the age-specific mean under the standard model and $\delta \in \{0.4, 0.7, 1.2\}$ quantifies the strength of exposure reduction. Larger values of $\delta$ correspond to greater reductions in the expected seroreactivity for the targeted age range. For ages outside the interval $[5,15]$, we set $\mu^*(a) = \mu(a)$. The latent state for each individual is then sampled from $T_i \sim \text{Beta}(\mu^*(a_i)\phi, [1-\mu^*(a_i)]\phi)$.

The third scenario represents immigration or population movement that introduces individuals with no prior pathogen exposure into endemic areas. In cross-sectional serological surveys, such individuals may be sampled before they have had sufficient contact with the pathogen to mount a detectable immune response, resulting in an absence of measurable seroreactivity at the time of sampling. We model this by designating a proportion $r \in \{0.10, 0.25, 0.40\}$ of individuals as completely unexposed, assigning them $T_i = 0$ exactly, while the remaining $1 - r$ proportion are sampled from the age-dependent Beta distribution. This creates a discrete point mass at zero seroreactivity superimposed on the continuous age-dependent distribution, representing a fundamentally different form of population heterogeneity than the model assumes. 

After fitting the mean-variance Beta model to each of the 2,000 simulated datasets, we compute the median bias and the median absolute deviation  for parameters of the conditional distribution of $Y_i$, namely $\mu_0$, $\mu_1$, $\sigma^2_0$ and $\sigma^2_1$, and the parameters of the latent variable $T_i$, consisting of $\eta_0$, $\eta_1$ and the variance parameter $\phi$.

Although we monitor estimation performance across all seven model parameters, our primary interest lies in the recovery of $\eta_0$ and $\eta_1$, which govern the age-dependent mean of the latent seroreactivity distribution and therefore determine the epidemiological inferences drawn from the fitted model. 

\subsubsection{Results}
\begin{table}[ht!]
\centering
\caption{Scenario 1: Intervention-induced antibody elevation. Parameter $\rho$ represents intervention coverage. Median bias and median absolute error (MAE) are reported for all seven model parameters across sample sizes and coverage levels. Results based on 2000 simulation replicates per configuration.}
\label{tab:scenario1_results}
\begin{tabular}{llrrrrrrr}
\toprule
Parameter & $n$ & \multicolumn{2}{c}{$\rho=0.2$} & \multicolumn{2}{c}{$\rho=0.5$} & \multicolumn{2}{c}{$\rho=0.8$} \\
 & & Bias & MAE & Bias & MAE & Bias & MAE \\
\midrule
$\mu_0$ & 1000 & 0.416 & 0.542 & 0.510 & 0.589 & 0.618 & 0.670 \\
 & 2000 & 0.276 & 0.367 & 0.383 & 0.426 & 0.545 & 0.558 \\
 & 5000 & 0.227 & 0.271 & 0.352 & 0.361 & 0.522 & 0.523 \\
\midrule
$\mu_1$ & 1000 & -0.105 & 0.569 & 0.136 & 0.467 & 0.220 & 0.387 \\
 & 2000 & 0.129 & 0.386 & 0.289 & 0.386 & 0.299 & 0.349 \\
 & 5000 & 0.175 & 0.260 & 0.317 & 0.334 & 0.327 & 0.334 \\
\midrule
$\sigma_0$ & 1000 & 0.065 & 0.113 & 0.079 & 0.119 & 0.084 & 0.119 \\
 & 2000 & 0.045 & 0.079 & 0.063 & 0.084 & 0.076 & 0.091 \\
 & 5000 & 0.039 & 0.053 & 0.059 & 0.065 & 0.075 & 0.076 \\
\midrule
$\sigma_1$ & 1000 & -0.015 & 0.300 & -0.133 & 0.300 & -0.192 & 0.300 \\
 & 2000 & -0.078 & 0.300 & -0.176 & 0.300 & -0.184 & 0.255 \\
 & 5000 & -0.068 & 0.133 & -0.145 & 0.161 & -0.164 & 0.170 \\
\midrule
$\eta_0$ & 1000 & -0.041 & 0.594 & 0.021 & 0.489 & 0.079 & 0.417 \\
 & 2000 & 0.099 & 0.326 & 0.138 & 0.276 & 0.143 & 0.248 \\
 & 5000 & 0.152 & 0.194 & 0.162 & 0.182 & 0.166 & 0.183 \\
\midrule
$\eta_1$ & 1000 & -0.104 & 0.135 & -0.096 & 0.114 & -0.077 & 0.097 \\
 & 2000 & -0.084 & 0.089 & -0.078 & 0.082 & -0.067 & 0.071 \\
 & 5000 & -0.070 & 0.070 & -0.069 & 0.069 & -0.064 & 0.064 \\
\midrule
$\phi$ & 1000 & -0.683 & 1.352 & -0.595 & 0.994 & -0.607 & 0.839 \\
 & 2000 & -0.227 & 0.693 & -0.234 & 0.531 & -0.429 & 0.547 \\
 & 5000 & -0.083 & 0.386 & -0.137 & 0.300 & -0.335 & 0.376 \\
\bottomrule
\end{tabular}
\end{table}

Under Scenario 1 (Table \ref{tab:scenario1_results}), the most notable consequence of the intervention-induced boost is a systematic upward displacement of $\mu_0$, the lower boundary parameter of the conditional distribution. The median bias in $\mu_0$ increases monotonically with coverage $\rho$, ranging from $0.42$ at $\rho = 0.2$ to $0.62$ at $\rho = 0.8$ for $n = 1000$, and remaining non-negligible even at $n = 5000$.  Importantly, however, the age-trend parameters $\eta_0$ and $\eta_1$ show small median bias across all sample sizes and $\rho$ values, with median biases not exceeding $0.17$ and $0.10$ in absolute value respectively at $n = 5000$. The precision parameter $\phi$ shows moderate negative bias at small sample sizes that largely decreases by $n = 5000$. The parameter $\sigma_1$ exhibits MAE of around $0.3$ at $n = 1000$ and $n = 2000$, indicating that the high-seroreactivity variance becomes poorly identified when the upper tail of the antibody distribution is distorted by the boost, though this also decreases at $n = 5000$. Overall, the observed patterns can be explained by the fact that the random boosting process shifts the observed antibody distribution upward, and the model largely absorbs part of this shift into the lower boundary estimate of the conditional model of $Y_i$ rather than into the age-dependent latent structure of $T_i$. 

\begin{table}[ht!]
\centering
\caption{Scenario 2: Age-targeted exposure reduction for individuals aged 5--15 years. Parameter $\delta$ represents the strength of exposure dampening on the logit scale for the age group 5--15 years. Median bias and MAE reported for all seven model parameters. Results based on 2000 simulation replicates per configuration.}
\label{tab:scenario2_results}
\begin{tabular}{llrrrrrrr}
\toprule
Parameter & $n$ & \multicolumn{2}{c}{$\delta=0.4$} & \multicolumn{2}{c}{$\delta=0.7$} & \multicolumn{2}{c}{$\delta=1.2$} \\
 & & Bias & MAE & Bias & MAE & Bias & MAE \\
\midrule
$\mu_0$ & 1000 & 0.410 & 0.456 & 0.371 & 0.408 & 0.299 & 0.323 \\
 & 2000 & 0.356 & 0.375 & 0.334 & 0.347 & 0.268 & 0.276 \\
 & 5000 & 0.312 & 0.318 & 0.307 & 0.312 & 0.255 & 0.258 \\
\midrule
$\mu_1$ & 1000 & -0.688 & 0.890 & -0.755 & 0.968 & -0.626 & 0.902 \\
 & 2000 & -0.293 & 0.375 & -0.283 & 0.354 & -0.282 & 0.348 \\
 & 5000 & -0.205 & 0.242 & -0.218 & 0.244 & -0.193 & 0.230 \\
\midrule
$\sigma_0$ & 1000 & 0.075 & 0.095 & 0.074 & 0.090 & 0.072 & 0.085 \\
 & 2000 & 0.071 & 0.081 & 0.077 & 0.083 & 0.074 & 0.079 \\
 & 5000 & 0.066 & 0.068 & 0.075 & 0.077 & 0.073 & 0.075 \\
\midrule
$\sigma_1$ & 1000 & 0.181 & 0.300 & 0.214 & 0.300 & 0.179 & 0.300 \\
 & 2000 & 0.061 & 0.300 & 0.062 & 0.300 & 0.064 & 0.300 \\
 & 5000 & 0.042 & 0.114 & 0.047 & 0.123 & 0.043 & 0.116 \\
\midrule
$\eta_0$ & 1000 & -0.972 & 1.116 & -1.463 & 1.543 & -1.804 & 1.827 \\
 & 2000 & -0.612 & 0.643 & -1.003 & 1.017 & -1.451 & 1.451 \\
 & 5000 & -0.469 & 0.474 & -0.866 & 0.871 & -1.324 & 1.324 \\
\midrule
$\eta_1$ & 1000 & 0.023 & 0.292 & 0.134 & 0.442 & 0.270 & 0.517 \\
 & 2000 & 0.070 & 0.154 & 0.177 & 0.256 & 0.311 & 0.391 \\
 & 5000 & 0.092 & 0.110 & 0.201 & 0.219 & 0.337 & 0.355 \\
\midrule
$\phi$ & 1000 & -1.316 & 2.311 & -1.380 & 2.566 & -1.151 & 2.540 \\
 & 2000 & -0.749 & 0.842 & -0.733 & 0.806 & -0.671 & 0.734 \\
 & 5000 & -0.588 & 0.605 & -0.619 & 0.640 & -0.548 & 0.570 \\
\bottomrule
\end{tabular}
\end{table}

Scenario 2 produces the most severe and structurally persistent misspecification of the three scenarios. The suppression of seroreactivity among individuals aged 5 to 15 years creates a local dip in the age-antibody profile that the fitted logit-linear model in age cannot represent, resulting in a systematic distortion of the age-trend parameters. Specifically, $\eta_0$ shows large negative median bias that worsens monotonically with $\delta$ and fails to diminish meaningfully with increasing sample size: at $n = 5000$, the median bias reaches $-0.47$, $-0.87$, and $-1.32$ for $\delta = 0.4$, $0.7$, and $1.2$ respectively, corresponding to relative errors of up to $66\%$ on a parameter whose true value is $-2.0$. The parameter $\eta_1$ 
acquires positive median bias that also increases with $\delta$ and persists at $n = 5000$ ($0.09$, $0.20$, $0.34$), indicating that the model compensates for the artificially low intercept by steepening the estimated age gradient. This compensation mechanism is internally coherent but epidemiologically misleading: the fitted model would suggest a 
steeper accumulation of seroreactivity with age than is actually present in the unaffected age groups. The lower boundary $\mu_0$ also acquires persistent positive bias of approximately $0.3$ that does not diminish with $n$, while $\phi$ shows sustained negative bias reflecting reduced apparent precision when the latent age structure is misspecified. The non-vanishing nature of the biases in $\eta_0$ and $\eta_1$ as $n$ increases confirms that this scenario induces genuine structural misspecification that cannot be resolved through larger sample sizes alone.

\begin{table}[ht!]
\centering
\caption{Scenario 3: Introduction of immunologically naive subpopulation. Parameter $r$ represents the proportion of unexposed individuals. Median bias and MAE reported for all seven model parameters.  Results based on 2000 simulation replicates per configuration.}
\label{tab:scenario3_results}
\begin{tabular}{llrrrrrrr}
\toprule
Parameter & $n$ & \multicolumn{2}{c}{$r=0.10$} & \multicolumn{2}{c}{$r=0.25$} & \multicolumn{2}{c}{$r=0.40$} \\
 & & Bias & MAE & Bias & MAE & Bias & MAE \\
\midrule
$\mu_0$ & 1000 & 0.309 & 0.392 & 0.183 & 0.246 & 0.091 & 0.156 \\
 & 2000 & 0.232 & 0.293 & 0.154 & 0.190 & 0.066 & 0.120 \\
 & 5000 & 0.188 & 0.202 & 0.125 & 0.135 & 0.056 & 0.083 \\
\midrule
$\mu_1$ & 1000 & -0.635 & 0.752 & -1.013 & 1.068 & -1.229 & 1.270 \\
 & 2000 & -0.331 & 0.415 & -0.653 & 0.675 & -0.800 & 0.802 \\
 & 5000 & -0.282 & 0.314 & -0.587 & 0.589 & -0.732 & 0.732 \\
\midrule
$\sigma_0$ & 1000 & 0.075 & 0.101 & 0.057 & 0.079 & 0.036 & 0.059 \\
 & 2000 & 0.067 & 0.081 & 0.061 & 0.068 & 0.036 & 0.045 \\
 & 5000 & 0.062 & 0.064 & 0.056 & 0.057 & 0.036 & 0.038 \\
\midrule
$\sigma_1$ & 1000 & 0.177 & 0.300 & 0.296 & 0.300 & 0.369 & 0.369 \\
 & 2000 & 0.081 & 0.300 & 0.203 & 0.300 & 0.246 & 0.300 \\
 & 5000 & 0.076 & 0.137 & 0.189 & 0.237 & 0.243 & 0.294 \\
\midrule
$\eta_0$ & 1000 & -0.118 & 0.583 & -0.086 & 0.555 & -0.190 & 0.513 \\
 & 2000 & 0.072 & 0.308 & 0.082 & 0.328 & 0.009 & 0.296 \\
 & 5000 & 0.133 & 0.190 & 0.185 & 0.226 & 0.115 & 0.192 \\
\midrule
$\eta_1$ & 1000 & -0.115 & 0.156 & -0.157 & 0.190 & -0.192 & 0.225 \\
 & 2000 & -0.097 & 0.104 & -0.129 & 0.137 & -0.161 & 0.167 \\
 & 5000 & -0.077 & 0.077 & -0.099 & 0.099 & -0.124 & 0.126 \\
\midrule
$\phi$ & 1000 & -1.270 & 1.502 & -1.741 & 1.866 & -1.994 & 2.053 \\
 & 2000 & -0.885 & 1.006 & -1.415 & 1.435 & -1.636 & 1.651 \\
 & 5000 & -0.764 & 0.803 & -1.315 & 1.316 & -1.567 & 1.567 \\
\bottomrule
\end{tabular}
\end{table}

In Scenario 3, the primary consequence of naive immigration is a systematic underestimation of the saturation parameter $\mu_1$, the upper boundary of the conditional distribution. The median bias in $\mu_1$ worsens progressively with the proportion of unexposed individuals $\pi$, reaching $-0.64$, $-1.01$, and $-1.23$ at $n = 1000$ for $\pi = 0.10$, $0.25$, and $0.40$ respectively. Although these biases decrease with ample size, they remain substantial at $n = 5000$ ($-0.28$, $-0.59$, $-0.73$), reflecting the fundamental incompatibility between the point mass at $T_i = 0$ for naive individuals and the continuous Beta distribution assumed by the model. A closely related pattern appears in $\phi$, which shows large negative median bias that also worsens with $r$ and persists 
at large $n$ ($-0.76$, $-1.32$, $-1.57$ at $n = 5000$). This is because the excess mass near $T_i = 0$ makes the latent distribution appear more diffuse than the fitted Beta can represent, so the precision parameter is consistently underestimated (we recall that lower values in $\phi$ increse the variance of $T$). By contrast, the age-trend parameters $\eta_0$ and $\eta_1$ exhibit only modest median biases that decrease toward zero as $n$ increases, suggesting that the age-dependent structure is approximately recoverable even in the presence of naive immigration, provided the sample is large enough.

Comparing across the three scenarios, a clear hierarchy of severity emerges. Scenario 2 is the most damaging because it induces structural misspecification in the age-trend parameters $\eta_0$ and $\eta_1$ that persists regardless of sample size, rendering the epidemiological interpretation of the fitted model unreliable even when $n = 5000$. 
Scenario 3 occupies an intermediate position: while the boundary and precision parameters are substantially distorted, the age-trend parameters remain approximately recoverable at large sample sizes, and the direction of the bias in $\mu_1$ provides a diagnostically useful signal of unexposed subpopulation presence. Scenario 1 is the mildest departure, 
with misspecification confined primarily to the lower boundary parameter $\mu_0$ while leaving the age-dependent structure intact. These findings suggest that the proposed modelling framework is most sensitive to 
misspecification that disrupts the global age-trend structure, and most robust to departures that act uniformly across age, which are largely absorbed into the estimated parameters of the conditional distribution of $Y_i$ given $T_i$.

\section{Application to malaria serology}
\label{sec:applications}

We present a re-analysis of malaria serology data from \citet{Bousema2013, Bousema2016}, collected in 2011 from communities in Rachuonyo South District, western Kenyan highlands. Finger-prick blood samples were collected on filter paper and used to detect total immunoglobulin~G (IgG) antibodies against two \textit{Plasmodium falciparum} blood-stage antigens: apical membrane antigen~1 (AMA1) and merozoite surface protein~1 (MSP1). The outcome variable in our analysis is the optical density (OD) measured by ELISA. 

In the analysis, we exclude children under 1 year of age due to the presence of maternally-derived antibodies, which can confound the interpretation of antibody responses acquired through natural infection. However, we do not restrict the data to a specific age range a priori, but instead use our modelling framework to develop a joint model across all ages.

\subsection{Modelling apical membrane antigen~1 (AMA1) concentrations}
\label{sec:ama_analysis}

\begin{figure}[ht!]
    \centering
    \includegraphics[width=1\linewidth]{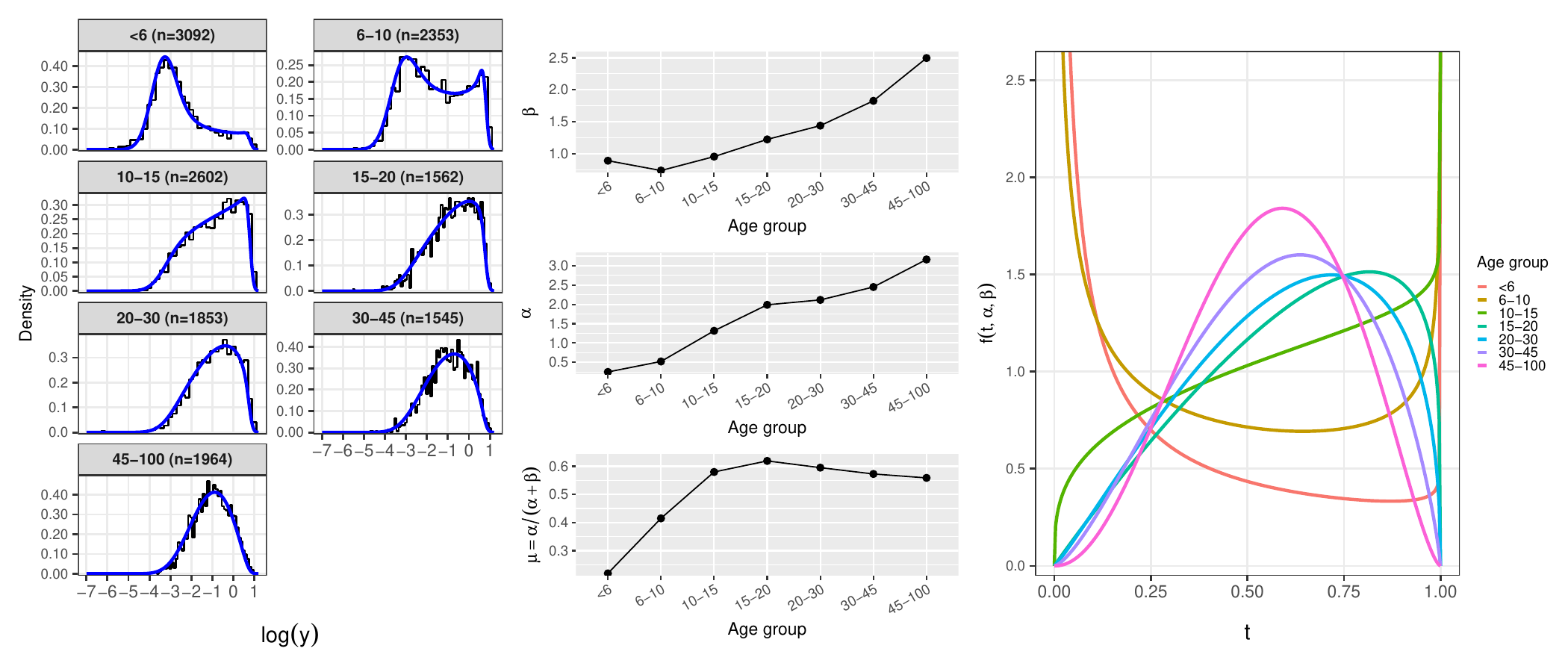}
    \caption{
Left panel: Histograms of the empirical distributions of log AMA1 concentrations within each age group and fitted densities (blue lines). Middle panel: Estimates of the Beta distribution parameters $\alpha$, $\beta$ and its mean $\alpha/(\alpha+\beta)$ across age groups. Right panel: Fitted Beta distributions of the latent variable $T$ for each age group. 
}

    \label{fig:all_ama}
\end{figure}

To gain insight into how to construct a joint model across all ages, we propose the following exploratory approach. We first fit the model in \eqref{eq:lbg}, assuming a Beta distribution with shape parameters $\alpha$ and $\beta$ for the latent state $T$, to the age group 1--6 years. We then refit the same model to successive age groups, namely $[1,6)$, $[6,10)$, $[10,15)$, $[15,20)$, $[20,30)$, $[30,45)$, and $[45,100)$.  For each refit, the parameters of the conditional Gaussian distributions, $\mu_0$, $\mu_1$, $\sigma^2_0$, and $\sigma^2_1$, are fixed to the estimates obtained from the youngest age group ($<6$ years), while only the parameters of the Beta distribution are re-estimated. This model building strategy is based on the assumption that the characteristics of the boundary states, representing individuals with very low seroreactivity ($T$ close to 0) or very high seroreactivity ($T$ close to 1), are biologically constant across the population. What changes with age is the distribution of the latent state $T$, which governs how individuals are distributed between these two extremes. The results of this exploratory step provide empirical guidance for specifying an age-dependent model of the latent state $T$, either through a single Beta distribution with age-varying parameters (Section \ref{sec:age_dependent_shapes}) or through a mixture of Beta distributions allowing for heterogeneous age-related subpopulations (Section \ref{sec:age_mix_prob}).

The results of this step are presented in Figure~\ref{fig:all_ama}. As age increases, the distribution of AMA1 antibody concentrations shifts toward higher values, indicating an overall increase in malaria exposure and infection. Another noteworthy pattern is that the bimodality, or apparent separation between the two mixture components, becomes less distinct above 15 years of age. Examining the estimates of the Beta shape parameters, $\alpha$ and $\beta$, we also observe an overall increase with age, with most of the estimate values for both parameters below 1 (indicating a U-shaped Beta) for ages below 15 years. This supports the initial interpretation based on the histograms: at younger ages, the population is more polarized, whereas with increasing age the fitted Beta distributions transition gradually from a U-shaped to a unimodal form. The increasing values of $\alpha$ and $\beta$ across age groups lead to a mean of $T$ that peaks in the $[15,20)$ interval and then gradually declines, suggesting a mild waning of antibody concentrations at older ages.

Based on the data, individuals below 15 years of age display a distinct immunological profile and antibody dynamics compared with older age groups. To illustrate how this can be accommodated, we adopt a joint modelling framework that accounts for these differences. For younger individuals, we use a mechanistic representation of $T$ based on the reversible catalytic model~\eqref{eq:rev_cat_lambda}, capturing the biological processes of antibody acquisition and boosting. For older individuals, instead, we adopt a more empirical specification that flexibly describes the observed antibody distributions without imposing strong mechanistic assumptions. Furthermore, we treat the age at which this model transition occurs as an additional parameter to estimate. We also point out that this represents one of several possible analyses within our framework, and serves to demonstrate how joint modelling across ages can make fuller use of the available data, avoiding the common practice of dichotomizing or fitting separate models by age group.

To implement this joint model, we specify the latent distribution of $T$ differently below and above an unknown change point $\tau$, while ensuring smooth continuity in the average of $T$ at the transition. For ages below $\tau$, $T$ follows the mechanistic mixture model based on the reversible catalytic formulation in~\eqref{eq:rev_cat_lambda}. For ages above $\tau$, $T$ is described by a single Beta distribution with mean $\mu(a)$ and logit-linear link, as defined in \eqref{eq:logit-log-reg}, and precision parameter $\phi$. The intercept $\eta_0$ is not estimated freely but is determined by the continuity constraint on the mean of $T$ at the change point:
$$
\eta_0 = \mathrm{logit}(\mu_{\tau^-}) - \eta_1 \log(\tau),
$$
where $\mu_{\tau^-}$ denotes the expected mean of $T$ immediately below age $\tau$ implied by the mechanistic model, given by
$$
\mu_{\tau^-} = p_0 \exp(-\tau\lambda)\,\frac{\alpha_1}{\alpha_1 + \beta_1}
           + \bigl(1-p_0 \exp(-\tau\lambda)\bigr)\,\frac{\alpha_2}{\alpha_2 + \beta_2}.
$$
This constraint prevents a discontinuity in the age-dependent mean of $T$ and ensures a biologically coherent transition between the early-life dynamics of antibody acquisition and the slower antibody fluctuations observed in older individuals when the model is fitted jointly. While this continuity constraint may reduce the empirical flexibility of the fit compared to an unconstrained specification, it is imposed to retain a mechanistically interpretable parameterisation that reflects the underlying biological process.

Based on the constraints imposed on the parameters of the Beta mixture model for $T$ (see Section~\ref{sec:age_mix_prob}), an initial fit indicated that $\alpha_1$, $\beta_1$, $\beta_2$, and $p_0$ had estimates converging towards their boundary values. The fitted values were close to 1 for $\alpha_1$, $\beta_2$, and $p_0$, while $\beta_1$ resulted in a very large value of approximately $2990$. These results suggest that the first mixture component can be approximated with a Dirac measure at zero, representing individuals with very low or no seroreactivity. In other words, this finding motivates a simplified formulation in which the first component of $T$ is treated as a degenerate mass at $0$, corresponding to individuals with absent or waning antibody levels, while the second component remains a continuous Beta distribution, with $\alpha_2$ to be estimated and $\beta_2 = 1$, representing individuals with detectable antibody responses arising from exposure to AMA1. The distribution of the latent variable $T$ for individuals below $\tau$ years of age is defined by the mixed discrete-continuous density
\begin{equation}
\label{eq:ama1_model_piecewise}
f_T(t;  a) =
\begin{cases}
1 - \pi(a), & t = 0, \\[6pt]
\pi(a)\,\alpha_2\, t^{\alpha_2 - 1}, & 0 < t < 1, \\[6pt]
0, & \text{otherwise}
\end{cases}
\qquad
.
\end{equation}

\begin{table}[ht!]
\centering
\caption{Maximum likelihood estimates and uncertainty summaries, including standard deviation (SD), the 0.025, 0.5 and 0.975 quantiles, for the joint latent Beta mixture model fitted to the AMA1 data across all ages. Estimates are obtained using a parametric bootstrap based on 1,000 replicates.}
\label{tab:ama_boot}

\begin{tabular}{lrrrrr}
\toprule
\textbf{Parameter} & \textbf{Estimate} & \textbf{SD} & \textbf{2.5\%} & \textbf{50\%} & \textbf{97.5\%}\\
\midrule
\multicolumn{6}{l}{\textit{Distribution of $Y \mid T$}}\\
\addlinespace[2pt]
\quad $\mu_0$ & $-3.194$ & $0.021$ & $-3.237$ & $-3.194$ & $-3.151$\\
\quad $\mu_1$ & $0.747$ & $0.010$ & $0.727$ & $0.747$ & $0.768$\\
\quad $\sigma_0$ & $0.745$ & $0.013$ & $0.719$ & $0.745$ & $0.772$\\
\quad $\sigma_1$ & $0.091$ & $0.013$ & $0.062$ & $0.091$ & $0.117$\\
\addlinespace[4pt]
\quad $\tau$ & $20.842$ & $0.420$ & $20.003$ & $20.876$ & $20.998$\\
\addlinespace[4pt]
\multicolumn{6}{l}{\textit{Distribution of $T$ for age $<\tau$ }}\\
\multicolumn{6}{l}{(with $\alpha_1=1$, $\beta_1=\infty$, $p_0=1$ and $\beta_2=1$)} \\
\addlinespace[2pt]
\quad $\alpha_2$ & $1.498$ & $0.033$ & $1.436$ & $1.499$ & $1.577$\\
\quad $\lambda$ & $0.148$ & $0.005$ & $0.140$ & $0.148$ & $0.158$\\
\addlinespace[4pt]
\multicolumn{6}{l}{\textit{Distribution of $T$ for age $\ge \tau$ }}\\
\addlinespace[2pt]
\quad $\phi$ & $4.544$ & $0.131$ & $4.298$ & $4.551$ & $4.828$\\
\quad $\eta_1$ & $-0.138$ & $0.027$ & $-0.191$ & $-0.135$ & $-0.080$\\
\bottomrule
\end{tabular}
\end{table}

Table~\ref{tab:ama_boot} summarises the parameter estimates and uncertainty from the parametric bootstrap procedure for the joint model fitted across the full age range. The estimated change point for the transition from the mechanistic formulation to the regression-based formulation is $\hat{\tau} = 20.8$ years. This suggests that the mechanistic framework remains appropriate up to approximately age 20. This finding demonstrates that rather than imposing an a priori age restriction, as is sometimes done in serology studies  (e.g. \citet{yman2016}), our model provides an empirical, data-driven approach to determining the appropriate age range for mechanistic modelling. 

The parameters of the mechanistic component for individuals below the change point, $\alpha_2$ and $\lambda$, are consistent with a setting of moderate malaria transmission. We recall that our interpretation of $\lambda$ within our modelling framework is as the rate at which individuals begin to mount measurable antibody responses. The estimated value $\hat{\lambda} = 0.148$ implies that most children develop detectable antibodies within the first few years of life. This means that by age five, approximately $52\%$ of children are expected to have initiated a measurable immune response (computed as $1 - \exp(-0.148 \times 5) \approx 0.52$), with the median age of response around 4.7 years. It is important to notice that, unlike analyses based on reversible catalytic models applied to dichotomised data, this interpretation does not refer to a discrete seroconversion event but rather to the onset of an underlying immunological process that may or may not result in a seropositive test outcome.

For individuals aged above the change point $\tau$, the estimated slope $\eta_1$ in the logit-linear model for $\mu(a)$ is slightly negative, suggesting a modest decline in the mean of $T$ at older ages, potentially reflecting waning antibody levels or reduced boosting due to lower exposure. 

\begin{figure}[ht!]
    \centering
    \includegraphics[width=0.98\linewidth]{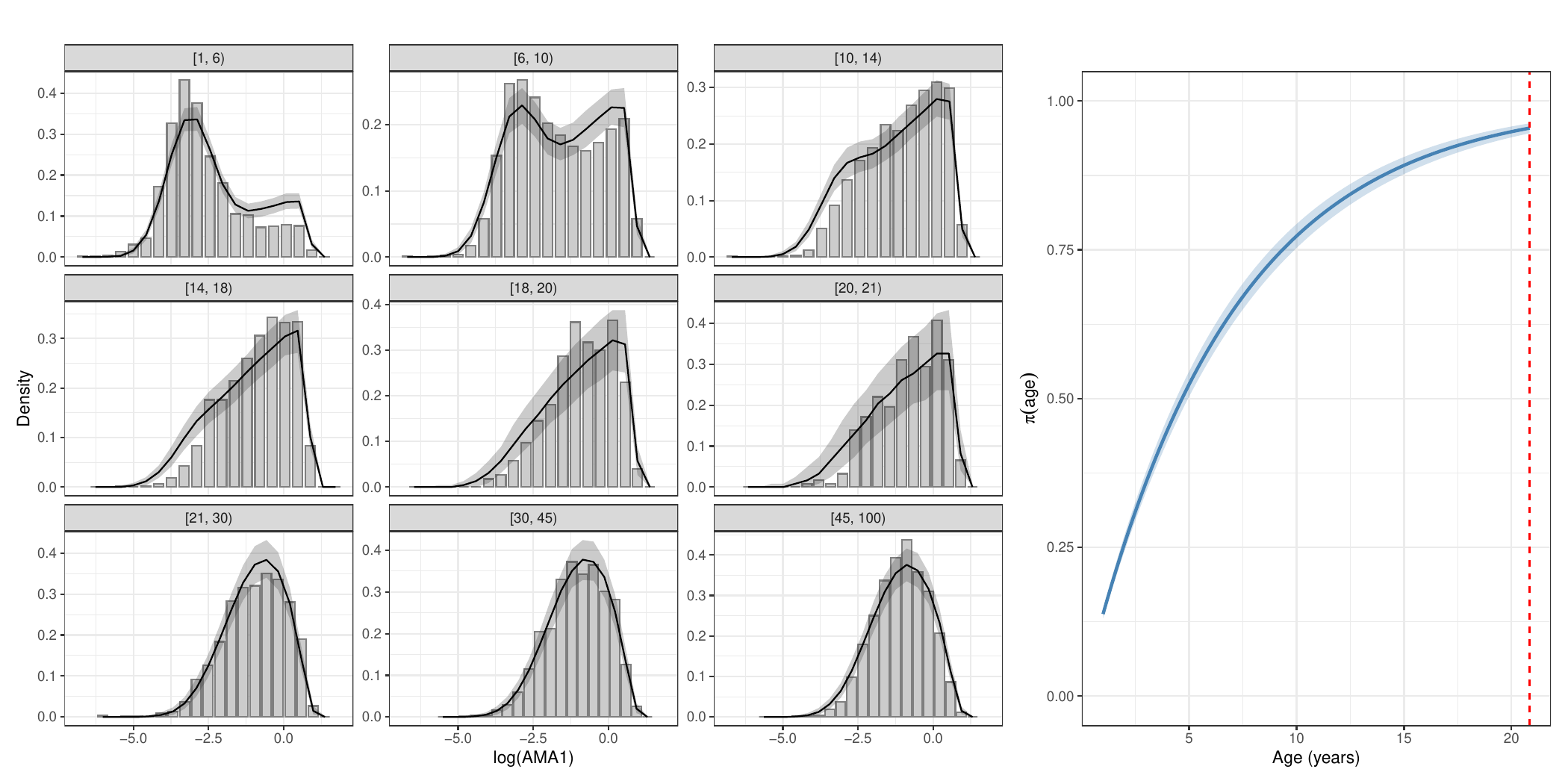}
    \caption{Model validation results for AMA1.
    Left: The plot shows the envelope and the median histogram compared with the empirical distribution of observed AMA1 antibody concentrations (black). 
    Right: Fitted probability of $\pi(a)$ of two-components Beta mixture, as a function of age; the vertical dashed line correspond to the estimated change point parameter $\tau$. \label{fig:ama_plots}
}
\end{figure}

The left panel of Figure~\ref{fig:ama_plots} shows the results of the validation procedure described in Section~\ref{sec:validation}. The main discrepancy between the fitted model and the observed data occurs in the age groups $[1, 6)$ and  $[6,10)$, where the model produces a distribution with a slightly higher concentration of individuals with elevated antibody levels than observed. For the remaining age groups, the observed deviations are minor. Such differences are expected because the model imposes a relatively strong structure on the age-dependent dynamics to ensure the interpretability of its components. As is often the case, a more empirical specification could yield a closer fit to the data, but at the cost of losing biological interpretability.

\subsection{Modelling merozoite surface protein~1 (MSP1) concentrations}

We now turn to MSP1 and apply the same sequence of steps to guide model development under the proposed latent-variable framework. As with AMA1, we begin with the age-stratified exploratory fits described in Section~\ref{sec:age_dependent_shapes}. In these preliminary analyses, the conditional distribution of $Y$ given $T$ is held fixed across age groups, while the Beta distribution governing the latent immune state $T$ is re-estimated within each age band. The resulting fitted densities and age-specific Beta distributions are shown in Figure~\ref{fig:all_msp}.

\begin{figure}[ht!]
    \centering
    \includegraphics[width=1\linewidth]{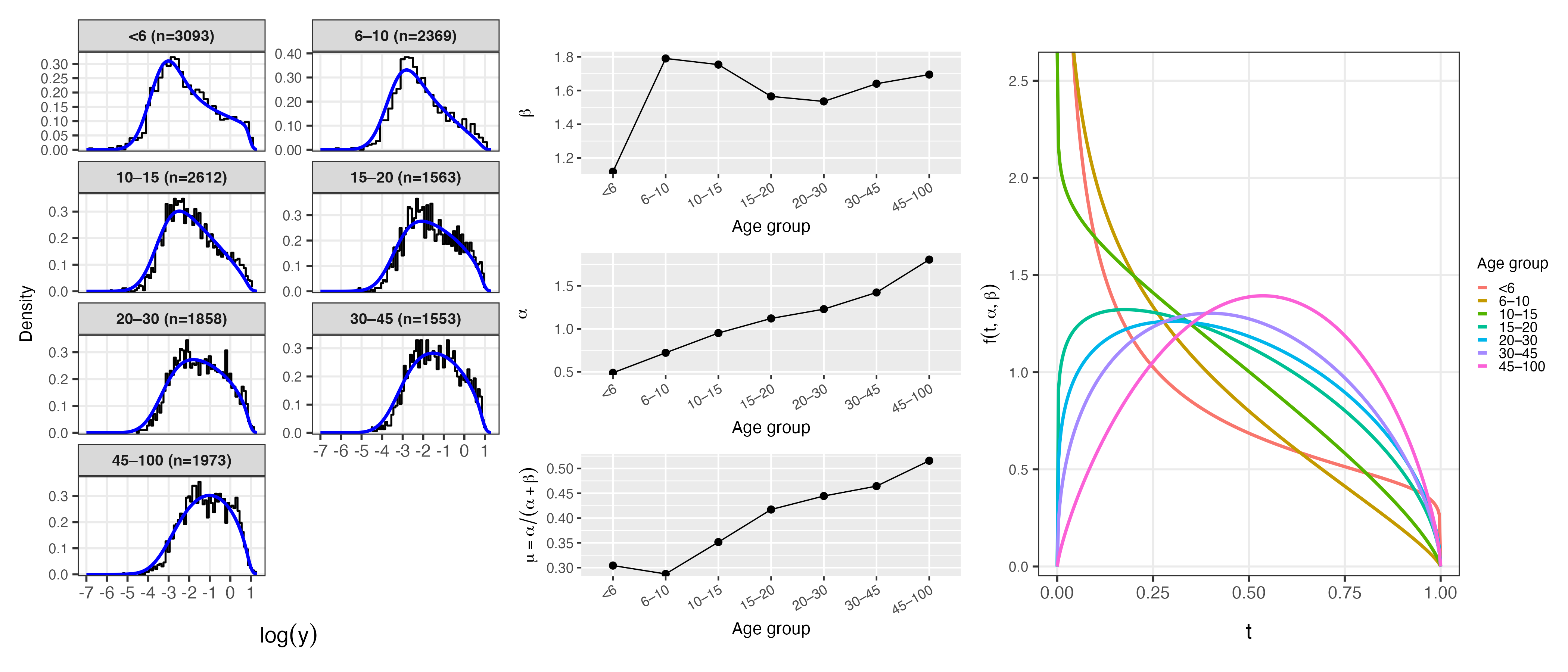}
    \caption{Left panel: Histograms of the empirical distributions of log MSP1 concentrations within each age group and fitted densities (blue lines). Middle panel: Estimates of the Beta distribution parameters $\alpha$, $\beta$ and its mean $\alpha/(\alpha+\beta)$ across age groups. Right panel: Fitted Beta distributions of the latent variable $T$ for each age group. 
}
    \label{fig:all_msp}
\end{figure}

The exploratory fits in Figure~\ref{fig:all_msp} (middle panel) show that, for MSP1, the first shape parameter of the latent distribution, $\alpha$, increases with age, mirroring the pattern observed for AMA1. The parameter $\beta$, instead, stabilises from the $[10,15)$ age group onwards. Consequently, the mean of $T$ rises steadily between the $[10,15)$ and $[45,100)$ age groups. This contrasts with AMA1 (Figure~\ref{fig:all_ama}), where the mean of $T$ begins to decline after the $[15,20)$ age group.

The fitted densities of $Y$ and the corresponding Beta distributions of $T$ (right panel of Figure~\ref{fig:all_msp}) further show that the distinction between individuals with and without detectable antibody responses is strongest in the youngest age group ($[1,6)$ years). At older ages, this separation becomes less distinctive more quickly than for AMA1, as most individuals cluster at higher MSP1 antibody levels. Overall, these patterns align with the expected profile of a long-lasting immune response that accumulates through repeated exposure.

Unlike the analysis for AMA1, here we adopt a more data-driven strategy that fully exploits the flexibility of the latent-variable framework. As we will show, this enables a closer empirical fit to the MSP1 data, where we instead adopted a more mechanistically motivated approach. Based on the insights from Figure~\ref{fig:all_msp}, we formulate an age-dependent model in which $T$ follows a Beta distribution with a smoothly varying $\alpha(a)$ and a piecewise age effect for $\beta(a)$:
$$
T \sim \mathrm{Beta}\big(\alpha(a),\,\beta(a)\big),
$$
with
$$
\alpha(a) = \alpha_0\,a^{\gamma}, 
\qquad
\beta(a) = \beta_0\,a^{\delta(a)},
$$
where
$$
\delta(a) =
\begin{cases}
\delta_1, & a \le \zeta,\\[4pt]
\delta_1+\delta_2, & a > \zeta.
\end{cases},
$$
and $\zeta$ representing the unknown change-point (in years) for the parameter $\delta(a)$. As in the AMA1 analysis, the parameters of the conditional distribution $Y \mid T$ are assumed not to depend on age, so that age-related changes in the observed MSP1 concentrations are driven entirely by the evolution of the latent immune state $T$.

To better understand the properties of the model, let us consider the mean of the latent state $T$ for a given age $a$, after the change point $\zeta$. This becomes
\begin{equation}
\E[T] = \frac{\alpha(a)}{\alpha(a) + \beta(a)}
      = \left(1 + \frac{\beta_0}{\alpha_0}\, a^{\delta_1 + \delta_2 - \gamma}\right)^{-1}, \text{for } a \geq \zeta.
\label{eq:msp_mean}
\end{equation}
Assuming $\gamma > \delta_{1} > 0$, before the change point $\zeta$, both $\alpha(a)$ and $\beta(a)$ increase with age, and the difference $(\gamma - \delta_1)$ determines how quickly the mean in~\eqref{eq:msp_mean} rises from values near 0 to values approaching 1 for $T$. After the change point, the exponent of $\beta(a)$ becomes $\delta_1 + \delta_2$, altering this trajectory. When $\delta_2 < 0$, $\beta(a)$ grows more slowly relative to $\alpha(a)$, which accelerates the increase in $\E[T]$ in \eqref{eq:msp_mean}. This indicates that the population continues to shift towards higher latent immune activation rather than plateauing. A genuine saturation of the immune response would instead require $\delta_1 + \delta_2$ to be close to $\gamma$, so that $\E[T]$ stabilises at around $(1+\beta_0/\alpha_0)^{-1}$. 

\begin{table}[!ht]
\centering
\caption{Maximum likelihood estimates and uncertainty summaries, including standard deviation (SD), the 0.025, 0.5 and 0.975 quantiles, for the latent Beta mixture model fitted to the MSP1 data including all ages. These are obtained using parametric bootstrap based 1,000 replicates.}
\label{tab:msp_boot}

\begin{tabular}{lrrrrr}
\toprule
\textbf{Parameter} & \textbf{Mean} & \textbf{SD} & \textbf{2.5\%} & \textbf{50\%} & \textbf{97.5\%}\\
\midrule
\multicolumn{6}{l}{\textit{Distribution of $Y \mid T$}}\\
\addlinespace[2pt]
\quad $\mu_0$ & $-4.481$ & $0.042$ & $-4.567$ & $-4.479$ & $-4.404$\\
\quad $\mu_1$ & $1.255$ & $0.026$ & $1.205$ & $1.255$ & $1.307$\\
\quad $\log \sigma_0$ & $-0.677$ & $0.043$ & $-0.764$ & $-0.676$ & $-0.599$\\
\quad $\log \sigma_1$ & $-5.716$ & $0.536$ & $-6.892$ & $-5.666$ & $-4.874$\\
\addlinespace[4pt]
\multicolumn{6}{l}{\textit{Distribution of $T$}}\\
\addlinespace[2pt]
\quad $\alpha_0$ & $0.093$ & $0.036$ & $0.025$ & $0.093$ & $0.166$\\
\quad $\gamma$ & $0.277$ & $0.011$ & $0.256$ & $0.277$ & $0.297$\\
\quad $\beta_0$ & $0.755$ & $0.037$ & $0.684$ & $0.754$ & $0.827$\\
\quad $\delta_1$ & $0.110$ & $0.018$ & $0.075$ & $0.110$ & $0.145$\\
\quad $\zeta$ & $11.623$ & $0.406$ & $11.004$ & $11.667$ & $12.197$\\
\quad $\delta_2$ & $-0.061$ & $0.010$ & $-0.080$ & $-0.061$ & $-0.042$\\
\bottomrule
\end{tabular}
\end{table}

\begin{figure}[ht!]
    \centering
    \includegraphics[width=1\linewidth]{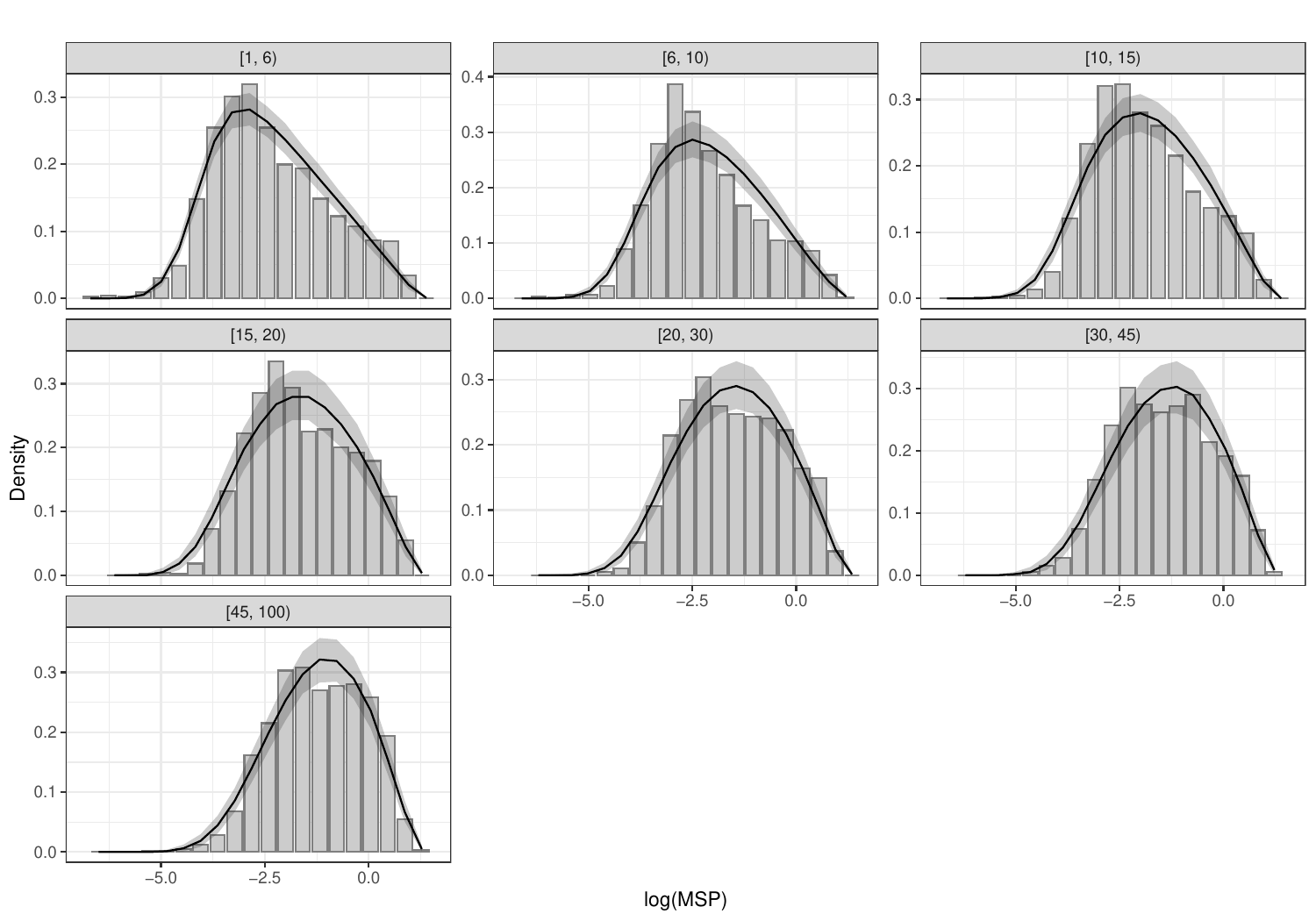}
    \caption{Model validation results for MSP1. The plot shows the envelope (shaded area) and the median histogram (solid black line) compared with the empirical distribution of observed MSP1 antibody concentrations. }
    \label{fig:all_msp_single_beta}
\end{figure}

Table~\ref{tab:msp_boot} summarises the parameter estimates for the MSP1 model, which point to a more durable and persistent antibody response than that observed for AMA1. The estimated exponents $\gamma$ and $\delta_1$ show that both $\alpha(a)$ and $\beta(a)$ increase with age, but that $\alpha(a)$ grows more rapidly, producing a steady rise in the mean latent immune state $\E[T]$ as given by equation~\eqref{eq:msp_mean}. This rise becomes even steeper after the change point at $\zeta = 11.6$ years, as reflected by the negative value of $\delta_2$, which slows the growth of $\beta(a)$ and therefore increases $\E[T]$ more sharply. The validation results in Figure~\ref{fig:all_msp_single_beta} indicate that the model provides a good overall fit, with only minor deviations of the model-based histograms from the empirical histograms observed across the different age groups.

Figure~\ref{fig:expected_lant_by_age} displays the expected mean of $Y$ as a function of age for both MSP1 and AMA1. AMA1 rises sharply in early childhood but then declines after about 20 years, which may reflect the higher individual-level variability in AMA1 antibody dynamics, with rapid boosting and waning following each infection \citep{akpogheneta2008,wipasa2010,ondigo2014}. The peak around 20 years may reflect the higher prevalence of asymptomatic infections with detectable parasitemia in this age range. At this stage, individuals have typically acquired clinical immunity (protection against disease symptoms) but have not yet fully developed anti-parasite immunity, sustaining infections with parasite densities high enough to continue boosting antibody levels. In contrast, as anti-parasite immunity develops in adults, lower parasite densities result in reduced antigenic stimulation and gradual antibody decline.

In contrast, MSP1 shows a steady increase across the full age range with no decline. This is consistent with MSP1 eliciting durable antibody responses \citep{akpogheneta2008,wipasa2010}. The change-point at $\zeta = 11.6$ years marks an acceleration in the rate of increase. While the precise mechanism underlying this shift is unclear, plausible explanations include cumulative exposure effects associated with school attendance, immune maturation during early adolescence, or cohort effects reflecting temporal changes in malaria transmission intensity.
Distinguishing among these possibilities would require longitudinal data or additional historical information on transmission patterns and intervention coverage.

\begin{figure}[!ht]
    \centering
    \includegraphics[width=0.8\textwidth]{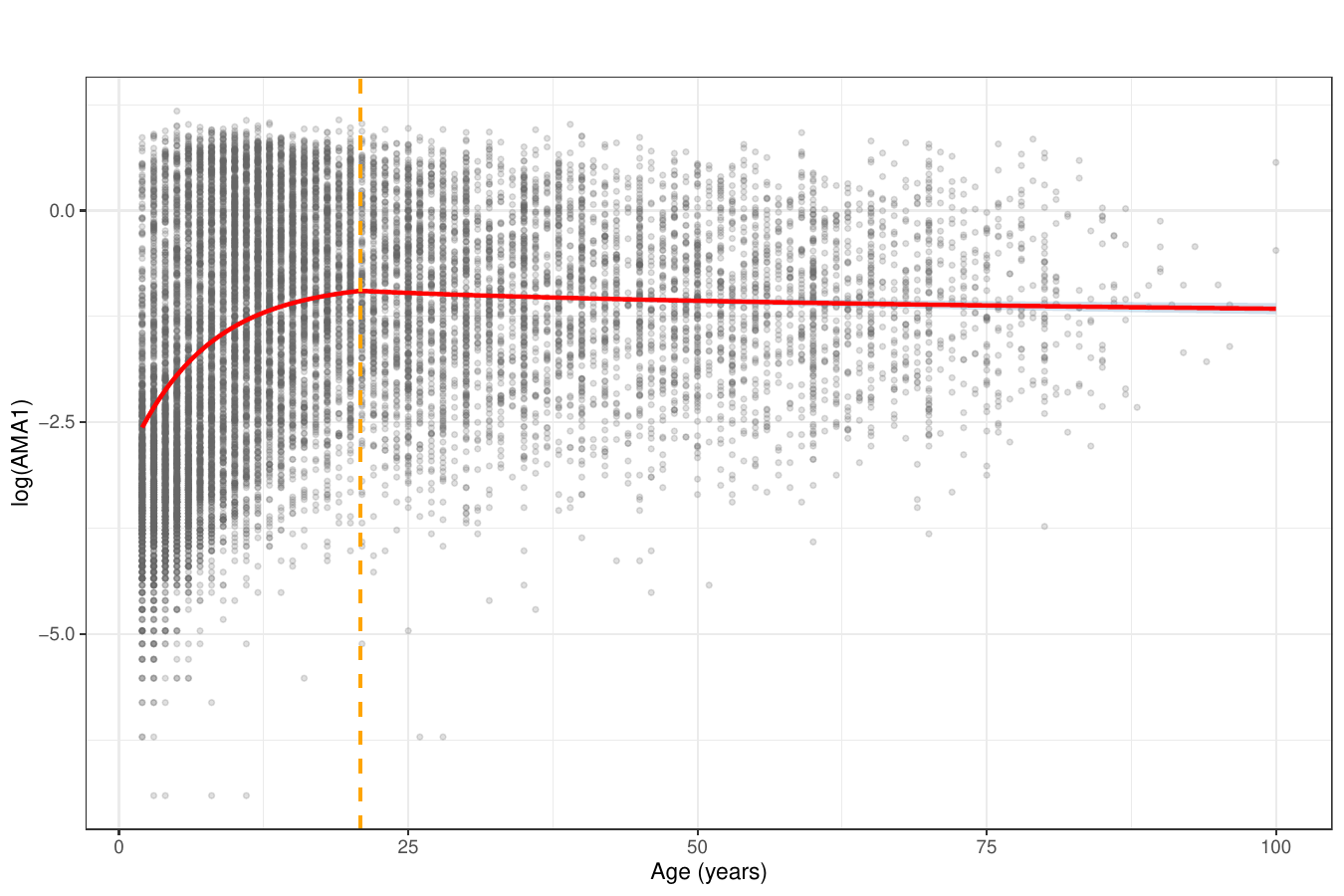}
    \includegraphics[width=0.8\textwidth]{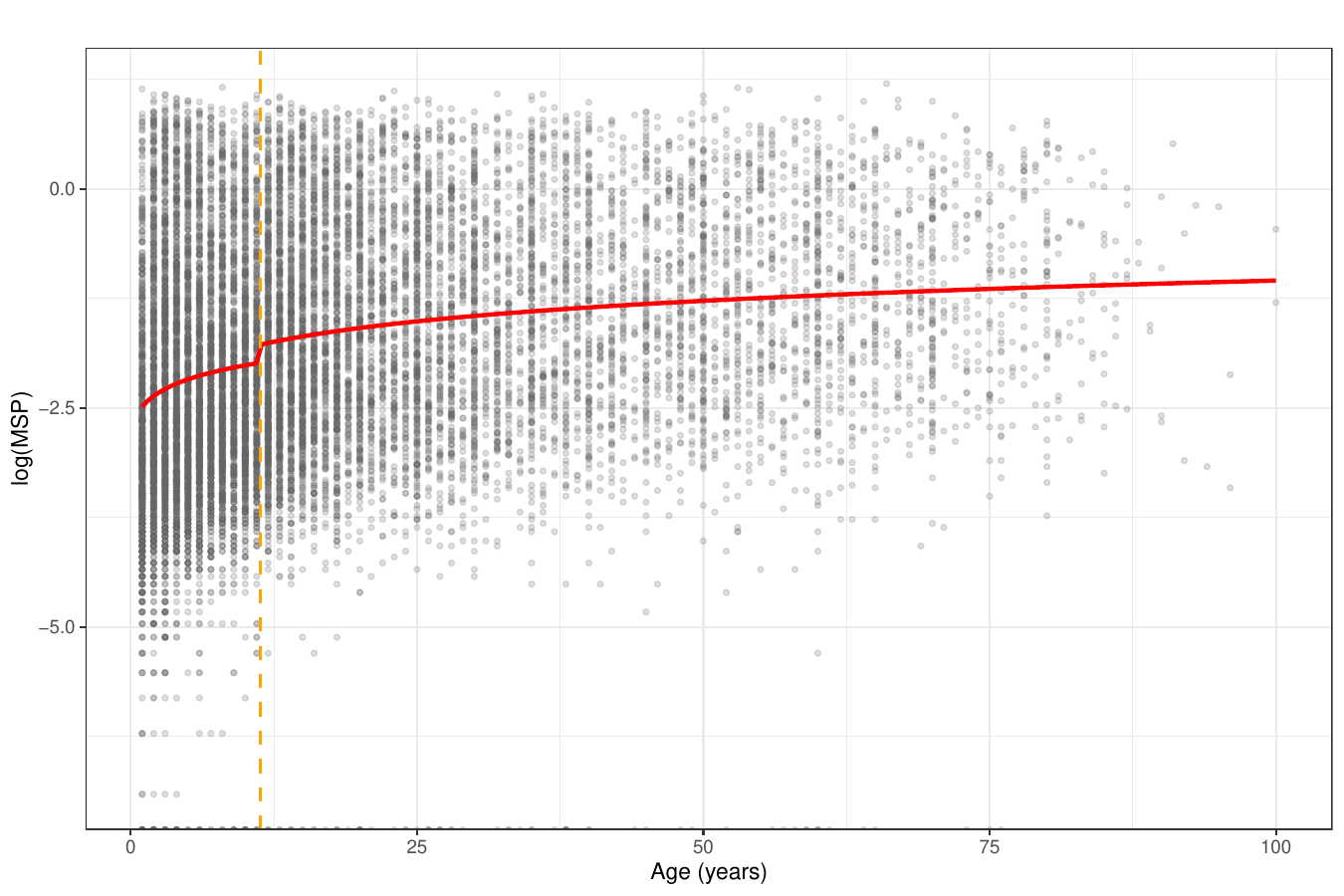}
    \caption{
        Plots of the log antibody levels for AMA1 (left panel) and MSP1 (right panel).
        The solid red curve represents the model-based expected log antibody levels and the black points correspond 
        to the observed log antibody levels.
        The dashed vertical line in the AMA1 panel marks the estimated age threshold (about 20.8 years) 
        where the model specification for the latent variable $T$ changes; see the main text for more details.
    }
    \label{fig:expected_lant_by_age}
\end{figure}

\newpage

\subsection{Model comparison and assessment of computational cost}
\label{subsec:comparison}

\begin{table}[H]
\centering
\caption{Model comparison between the Gaussian mixture model (GMM) and the latent Beta model (LBM). The table reports the Bayesian Information Criterion (BIC) using maximum likelihood estimation (MLE) and $L_2$ histogram-based optimisation, with parameters $\mu_0$, $\mu_1$, $\sigma_0^2$ and $\sigma_1^2$ fixed from the first age group ($<6$ years) using the Gaussian mixture. The BIC difference $\Delta$BIC is defined as $\text{BIC}_{\text{GMM}}-\text{BIC}_{\text{LBM}}$, with positive values indicating a better fit of the LBM.}
\label{tab:model_comparison}
\resizebox{\textwidth}{!}{%
\begin{tabular}{llrrrrrrrr}
\toprule
 &  & \multicolumn{4}{c}{\textbf{MLE}} & \multicolumn{4}{c}{\textbf{$L_2$}} \\
\cmidrule(lr){3-6} \cmidrule(lr){7-10}
\textbf{Antigen} & \textbf{Age} & \textbf{GMM} & \textbf{LBM} & \textbf{$\Delta$BIC} & \textbf{Time} & \textbf{GMM} & \textbf{LBM} & \textbf{$\Delta$BIC} & \textbf{Time} \\
\midrule
AMA1 & $<6$      & 12{,}853 & 12{,}314 & 539  & 3{,}066.450 & 12{,}874 & 13{,}143 & -269 & 8.128 \\
     & $6$--$10$ & 8{,}033  & 7{,}988  & 45   & 70.568      & 8{,}131  & 7{,}994  & 137  & 1.102 \\
     & $10$--$15$& 8{,}565  & 8{,}036  & 529  & 37.449      & 8{,}565  & 8{,}040  & 525  & 0.842 \\
     & $15$--$20$& 4{,}946  & 4{,}470  & 476  & 12.213      & 4{,}946  & 4{,}472  & 474  & 1.366 \\
     & $20$--$30$& 6{,}016  & 5{,}316  & 700  & 13.675      & 6{,}017  & 5{,}318  & 699  & 0.666 \\
     & $30$--$45$& 5{,}084  & 4{,}300  & 784  & 9.989       & 5{,}085  & 4{,}301  & 784  & 1.108 \\
     & $45+$     & 6{,}525  & 5{,}131  & 1{,}394 & 11.018  & 6{,}526  & 5{,}133  & 1{,}393 & 0.857 \\
\addlinespace[4pt]
MSP1 & $<6$      & 13{,}277 & 13{,}235 & 42   & 1{,}181.518 & 13{,}328 & 13{,}291 & 37   & 7.540 \\
     & $6$--$10$ & 7{,}816  & 7{,}654  & 162  & 61.514      & 7{,}816  & 7{,}674  & 142  & 0.750 \\
     & $10$--$15$& 8{,}678  & 8{,}364  & 314  & 33.332      & 8{,}679  & 8{,}382  & 297  & 1.617 \\
     & $15$--$20$& 5{,}184  & 4{,}973  & 211  & 17.352      & 5{,}185  & 4{,}980  & 205  & 1.255 \\
     & $20$--$30$& 6{,}129  & 5{,}870  & 259  & 19.832      & 6{,}129  & 5{,}878  & 251  & 1.238 \\
     & $30$--$45$& 5{,}071  & 4{,}843  & 228  & 13.780      & 5{,}071  & 4{,}849  & 222  & 1.169 \\
     & $45+$     & 6{,}391  & 5{,}932  & 459  & 14.386      & 6{,}391  & 5{,}941  & 450  & 1.244 \\
\bottomrule
\end{tabular}
}
\end{table}

We conduct an age-stratified analysis to compare the proposed latent Beta model (LBM) with the standard Gaussian mixture model (GMM). For the LBM, we assume a single Beta distribution for the latent variable $T$, with shape parameters $\alpha$ and $\beta$ to be estimated. For both models, the conditional distribution parameters $\mu_0$, $\mu_1$, $\sigma_0$, and $\sigma_1$ are fixed at values estimated by fitting the GMM and LBM, respectively, to the youngest age group ($<6$ years), which shows the clearest separation between individuals with low and high antibody levels. For each subsequent age group, we re-estimate only the mixing proportion $\pi$ in the GMM or the Beta shape parameters $\alpha$ and $\beta$ in the LBM. Since the GMM can be viewed as a special case of the LBM in which the latent distribution collapses to a discrete two-point distribution, this constrained comparison directly assesses whether age-related variation in antibody responses can be explained through changes in the latent distribution structure for the LBM, as opposed to changes in mixture weights for the GMM. 

The results of this comparison are presented in Table~\ref{tab:model_comparison}. For LBM the fitted densities using MLE are shown in the left panel of Figure \ref{fig:all_ama} and Figure \ref{fig:all_msp}; for GMM these are shown in Figure \ref{fig:compare_gmm_ama} and Figure \ref{fig:compare_gmm_msp} of the Appendix. Under maximum likelihood estimation, the LBM consistently outperforms the GMM across all age groups and both antigens. For AMA1, the BIC advantage of the LBM increases with age and becomes substantial from age 10 years onwards, often exceeding 100 BIC units in difference. For MSP1, the BIC differences in favour of the LBM are generally smaller than those observed for AMA1, indicating that the GMM is better able to adjust to age-related changes in MSP1 antibody distributions than for AMA1. Nevertheless, the LBM still provides a superior fit across all age groups, suggesting that a continuous latent representation remains more appropriate even when the Gaussian mixture adapts relatively well.

When using the $L_2$ histogram-based optimisation, the qualitative pattern of results remains largely unchanged. The only exception is the youngest age group for AMA1, where the $L_2$ criterion slightly favours the GMM, reflecting the fact that both models rely on parameters fixed using this age group and that the $L_2$ approximation places greater emphasis on matching the empirical histogram. Aside from this case, the BIC differences obtained under $L_2$ optimisation are highly consistent with those derived from maximum likelihood, both in direction and magnitude.

From a computational perspective, the $L_2$ histogram-based optimisation yields substantial efficiency gains relative to maximum likelihood estimation. For the LBM, speedups range from more than an order of magnitude for moderate-sized age groups to several hundred-fold for the largest groups, reducing computation times from tens of minutes to a few seconds. These gains are particularly pronounced for AMA1, but are also substantial for MSP1. Importantly, this dramatic reduction in computational cost is achieved without altering our conclusions, indicating that the $L_2$ approach provides a reliable and efficient approximation for large-scale model comparisons.

\section{Discussion}
\label{sec:discussion}

We have introduced a latent variable modelling framework that offers four key methodological contributions. First, it provides a principled way to model the full distribution of antibody concentrations without dichotomization, which remains the dominant practice in serological analyses \citep{corran2007,Drakeley2005,Arnold2014,cox2022} despite well-established information loss \citep{royston2006,kyomuhangi2021}. By modelling antibody levels directly, the framework provides a more refined characterisation of how immune responses vary and evolve with age. Second, it offers a flexible approach for incorporating age-dependent structure through either mechanistic specifications that encode epidemiological assumptions about exposure and immune response dynamics, or data-driven specifications that adaptively capture empirical patterns without imposing strong structural constraints. Third, by explicitly representing immune activation as a continuum, the framework more naturally accommodates complex population heterogeneity in immunological responses. Fourth, we introduce an $L_2$ histogram-based estimator that provides a computationally efficient alternative to maximum likelihood estimation, enabling practical implementation of the framework even in computationally demanding contexts with large sample sizes.

A further distinction must be made between mechanistic models applied to the latent variable $T$ in our framework and the same mechanistic models traditionally applied to dichotomised antibody outcomes. For example, when the reversible catalytic model in \eqref{eq:rev_cat_lambda} is used for binary serostatus data, the parameter $\lambda$ is interpreted as the seroconversion rate, that is, the rate at which individuals transition from seronegative to seropositive status. In our framework, however, seroconversion is not a binary event, and $T$ represents a continuous underlying immune state. Consequently, the parameter $\lambda$ should be interpreted as the rate at which individuals develop stronger seroreactivity, which may or may not lead to a seropositive test outcome. This thus suggests that estimates of $\lambda$ obtained under our model are expected to be higher than those derived from dichotomised serological analyses. Another important feature of the framework is that it enables a more rigorous statistical validation of the biological assumptions underlying mechanistic models, an aspect that is rarely examined in standard serological analyses. For instance, the application in Section~\ref{sec:applications} reveals subtle deviations from the reversible catalytic model in the immune response dynamics of children aged below 10 years. Such discrepancies would remain undetected in analyses based on dichotomized serostatus data.

When incorporating age into the distribution of the latent variable $T$, the choice between a single Beta distribution and a finite mixture of Beta distributions should be driven primarily by the inferential objectives of the analysis rather than by the need for increased distributional flexibility \textit{per se}. Although a mixture offers greater flexibility, the trade-off with identifiability remains a practical concern. More fundamentally, the two approaches may yield near-identical fits while carrying qualitatively different interpretations. For example, a U-shaped distribution on $(0,1)$ can be expressed equivalently using a single Beta with $\alpha < 1, \beta < 1$ or as a two-component mixture concentrated near the boundaries; however, only the latter allows explicit modelling of subpopulations with distinct seroreactivity profiles whose age-dependent prevalence can be linked to mechanistic formulations such as the catalytic model in \eqref{eq:rev_cat_lambda}. The single Beta distribution might be a more natural choice when the goal is a descriptive characterization of how age shapes the distribution of $T$ and, hence, the observed antibody levels. The mixture formulation instead becomes advantageous when the scientific question explicitly concerns disentangling immunologically distinct subgroups within the population, and is most
naturally applied in contexts where the age dynamics are sufficiently well understood to support a mechanistic parameterization of the mixing probabilities, as in the malaria case studies considered here.

When using a mixture distribution to define the distribution of the latent variable $T$, an open question remains on how many components one should use. In particular, whether this number should be fixed \emph{a priori} or determined from the data
depends fundamentally on the inferential goal. In applications directed at explaining mechanisms of antibody acquisition, as in Section~\ref{sec:age_mix_prob}, the number of components is a deliberate scientific choice: the two-component specification encodes a biologically interpretable contrast between individuals with low and high immune activation, and its justification comes from subject-matter knowledge rather than data-adaptive criteria. However, when the framework is extended to the joint modelling of responses to
multiple antigens---on which we provide more discussion below---the goal may lie in the identification of latent immunological subgroups whose number is not known
in advance. In this setting, modelling the multivariate latent state through a mixture distribution offers a conceptually appealing alternative to standard clustering methods, since the clustering is performed in a continuous latent space rather than directly on noisy antibody measurements. For such applications, data-driven selection of the number of components is both meaningful and, we believe, practically feasible. Future research could thus extend the current framework to multivariate latent clustering by adapting existing approaches for selecting the number of mixture components, both in a Bayesian and frequentist context (see for example \citealt{richardsongreen1997, rousseau2011,
malsiner2016}).

A central assumption of our modelling framework is that the conditional distribution parameters ($\mu_0$, $\mu_1$, $\sigma_0^2$, $\sigma_1^2$) reflect primarily assay characteristics rather than biological variation, with $T$ capturing the underlying immune state. This assumption is most credible within a single laboratory using consistent protocols, but becomes problematic when comparing across settings where boundary parameters may be confounded with biological differences in exposure histories or genetic backgrounds. For antigens eliciting qualitatively different immune responses, the saturation parameter $\mu_1$ may reflect both the assay's upper detection limit and genuine biological constraints on antibody production or persistence. For these reasons, the proposed latent based modelling approach does not automatically achieve standardisation via $T$ across laboratories or assay platforms. For $T$ to serve as a comparable metric across settings, calibration using shared reference standards would be required to anchor the boundary parameters on a common scale. Without such calibration, $T$ estimates should be interpreted as relative measures within each specific assay context rather than absolute, universally comparable quantities. Regarding assay saturation, while the framework explicitly models $\mu_1$ as a measurement ceiling, it cannot distinguish individuals beyond this limit when both produce saturated measurements. The unit interval $[0,1]$ for $T$ is a modelling convenience; the endpoints represent extremes of observable variation rather than absolute biological states. In high transmission settings with widespread saturation, the upper range of $T$ may become poorly identified. 

An important consideration for applying our framework concerns vaccination status. For diseases conferring durable immunity following vaccination or natural infection --- e.g.\ yellow fever, where a single vaccine dose provides lifelong protection \citep{Staples2015} --- binary serological classification may suffice for most purposes, and continuous antibody modelling may offer limited added value. However, for diseases characterised by partial or waning immunity, such as malaria, continuous antibody measurements provide information about immune activation dynamics that is lost when data are dichotomised. The recent introduction of malaria vaccines (RTS,S/AS01 and R21/Matrix-M) \citep{RTS2015,Datoo2021} presents new challenges for serological surveillance that affect any analytical approach, not just our modelling framework. Because these vaccines target the pre-erythrocytic circumsporozoite protein rather than the blood-stage antigens we analyse (AMA1, MSP1), vaccination does not directly boost these antibody responses. However, reduced transmission resulting from vaccination programmes could alter natural boosting patterns, thereby affecting the age-dependent dynamics we model. Our simulation study provides evidence on the potential consequences of such departures. When interventions uniformly elevate antibody levels across all ages, the estimated age-dependent structure remains largely intact, with misspecification absorbed primarily into the parameters governing the conditional distribution of antibody levels given the latent seroreactivty state. By contrast, when interventions selectively reduce exposure within specific age groups, inference on the age-dependent profile of seroreactivity becomes unreliable, with biases in the estimated age structure that persist regardless of sample size. In such settings, vaccination status should be incorporated into the model either as a covariate or through a stochastic process affecting the latent immune state $T$, depending on the research question and disease context. When data on multiple antigens are available, joint modelling of their responses might further help to distinguish vaccine-induced from infection-induced seroreactivity.

Our simulation study demonstrates that the $L_2$ estimator achieves speedup factors ranging from 7-fold at $n=100$ to over 250-fold at $n=5000$ relative to MLE, while maintaining reasonable statistical accuracy in most scenarios. However, MLE demonstrates superior asymptotic efficiency, particularly in challenging scenarios with symmetric or highly skewed latent distributions, where $L_2$ exhibits persistent bias even at large sample sizes. We therefore recommend a hybrid strategy for parameter estimation which consists of using the $L_2$ estimation for exploratory analysis and, whenever computationally feasible, refine final estimates using MLE with $L_2$ solutions as starting values. However in scenarios where there is a clear bimodal separation in the distribution of antibody levels, $L_2$ performs nearly as well as MLE even at moderate sample sizes and might be used without the need for further refinement of the estimates with MLE.

Future research will  focus on extending the current modelling framework to address a wider range of questions in sero epidemiological research, including the development of joint models for multiple antibody responses. One natural direction is to replace the single latent quantity with a multivariate latent structure and to combine this with an appropriate multivariate Beta distribution, for example using the construction proposed by \citet{arnold2011}. This would allow the model to capture dependencies between antibody responses to different antigens and provide deeper insight into how immune markers coevolve following exposure. The same idea extends naturally beyond serology, since the latent variable formulation is directly relevant to high dimensional omics settings such as proteomics and transcriptomics, where disease states are typically treated as discrete categories, for example healthy versus diseased or sepsis molecular endotypes. Recent work using neural network models for disease classification from multi omics data \citep{hartman2023} and large transcriptomic and proteomic studies of sepsis subtypes \citep{wong2012, reyes2020} all adopt this discrete label perspective, despite the underlying host response being biologically continuous rather than strictly partitioned. Within our framework this continuum can be represented by a latent variable $T$, with each biomarker, here each transcript or protein, modelled as an observed outcome $Y_{j}$ whose conditional distribution depends smoothly on $T$ through marker specific mean and variance functions that describe how expression evolves across the response axis. While a single latent variable provides a parsimonious representation of the dominant gradient of biological variation, many omics settings involve several distinct processes, for example inflammation, immune suppression, and endothelial injury, which may jointly determine the multivariate profile. In such cases it may be necessary to adopt a multivariate latent representation and to define the conditional distribution of $Y_{j}$ as a function of this collection of latent quantities, allowing the framework to capture more complex patterns and multiple dimensions of biological variation that cannot be reduced to a single continuum.

Another important avenue concerns the inclusion of spatial and temporal structure in the latent process $T$. Several options are available, as this can be achieved by introducing correlated random effects into the parameters of the Beta distribution for $T$. For example, using the mean-variance parameterisation described in Section~\ref{sec:age_dependent_shapes}, spatially or temporally correlated random effects could be introduced into the mean of $T$. This would support applications in spatial mapping, where the aim is to identify geographic patterns in immunity and detect potential immunity gaps, as well as in longitudinal studies, where interest lies in modelling serial correlation and characterising how an individual's immune response evolves over time, for example after an intervention. 

In summary, the flexibility of the proposed latent variable modelling framework, together with its strong potential for further extensions, illustrate its broad applicability in sero epidemiological research and beyond. We believe that these strengths position it as a compelling approach for routine use in settings where a deeper understanding of immunological processes is of primary importance.

\section*{Acknowledgements}

We thank all those who contributed to the collection of data included in this paper, specifically the survey participants in Kenya, and the KEMRI/CDC research team.

We would like to thank Prof. Peter J. Diggle (Lancaster University), Dr. Irene Kyomuhangi (Lancaster University), Ivan Hejný (Lund University), Erik Hartman (Lund University), Prof. Joacim Rocklöv (University of Heidelberg), and Dr. Gillian Stresman (University of South Florida) for their valuable feedback, which helped to improve the manuscript.

The computations described in this paper were performed using the University of Birmingham's BlueBEAR HPC service, which provides a High Performance Computing service to the University's research community. See \url{www.birmingham.ac.uk/bear} for more details.

\bibliographystyle{plainnat}
\bibliography{references}

@article{freedman1981histogram,
  title        = {On the Histogram as a Density Estimator: {$L_2$} Theory},
  author       = {Freedman, D. and Diaconis, P.},
  journal      = {Zeitschrift f{\"u}r Wahrscheinlichkeitstheorie und Verwandte Gebiete},
  volume       = {57},
  number       = {4},
  pages        = {453--476},
  year         = {1981},
  publisher    = {Springer}
}

@article{richardsongreen1997,
  author  = {Richardson, Sylvia and Green, Peter J.},
  title   = {On {B}ayesian Analysis of Mixtures with an Unknown Number of
             Components (with discussion)},
  journal = {Journal of the Royal Statistical Society, Series B},
  year    = {1997},
  volume  = {59},
  number  = {4},
  pages   = {731--792}
}

@article{rousseau2011,
  author  = {Rousseau, Judith and Mengersen, Kerrie},
  title   = {Asymptotic Behaviour of the Posterior Distribution in
             Overfitted Mixture Models},
  journal = {Journal of the Royal Statistical Society, Series B},
  year    = {2011},
  volume  = {73},
  number  = {5},
  pages   = {689--710}
}

@article{malsiner2016,
  author  = {Malsiner-Walli, Gertraud and Fr{\"u}hwirth-Schnatter, Sylvia
             and Gr{\"u}n, Bettina},
  title   = {Model-Based Clustering Based on Sparse Finite {G}aussian Mixtures},
  journal = {Statistics and Computing},
  year    = {2016},
  volume  = {26},
  number  = {1},
  pages   = {303--324}
}

@article{scott2001parametric,
  title={Parametric statistical modeling by minimum integrated square error},
  author={Scott, D. W},
  journal={Technometrics},
  volume={43},
  number={3},
  pages={274--285},
  year={2001},
  publisher={Taylor \& Francis}
}

@book{vandervaartwellner1996,
  title        = {Weak Convergence and Empirical Processes: With Applications to Statistics},
  author       = {van der Vaart, A. W. and Wellner, J. A.},
  year         = {1996},
  publisher    = {Springer},
  address      = {New York}
}

@article{kyomuhangi2021,
  title={A unified and flexible modelling framework for the analysis of malaria serology data},
  author={Kyomuhangi, I. and Giorgi, E.},
  journal={Epidemiology and Infection},
  volume={149},
  pages={e99},
  year={2021},
  doi={10.1017/S0950268821000753}
}

@article{Staples2015,
  author = {Staples, J. E. and Bocchini, J. A. and Rubin, L. and Fischer, M.},
  title = {Yellow Fever Vaccine Booster Doses: Recommendations of the Advisory Committee on Immunization Practices, 2015},
  journal = {MMWR Morbidity and Mortality Weekly Report},
  year = {2015},
  volume = {64},
  number = {23},
  pages = {647--650},
  month = {June},
  pmid = {26086636},
  url = {https://www.cdc.gov/mmwr/preview/mmwrhtml/mm6423a5.htm}
}

@article{RTS2015,
  author = {{RTS,S Clinical Trials Partnership}},
  title = {Efficacy and safety of RTS,S/AS01 malaria vaccine with or without a booster dose in infants and children in Africa: final results of a phase 3, individually randomised, controlled trial},
  journal = {The Lancet},
  volume = {386},
  number = {9988},
  pages = {31--45},
  year = {2015},
  issn = {0140-6736},
  doi = {10.1016/S0140-6736(15)60721-8},
  url = {https://www.sciencedirect.com/science/article/pii/S0140673615607218}
}

@article{Datoo2021,
  author = {Datoo, M. S. and Natama, M. H. and Som{\'e}, A. and Traor{\'e}, O. and Rouamba, T. and Bellamy, D. and Yameogo, P. and Valia, D. and Tegneri, M. and Ouedraogo, F. and Soma, R. and Sawadogo, S. and Sorgho, F. and Derra, K. and Rouamba, E. and Orindi, B. and Ramos-Lopez, F. and Flaxman, A. and Cappuccini, F. and Kailath, R. and Elias, S. C. and Mukhopadhyay, E. and Noe, A. and Cairns, M. and Lawrie, A. and Roberts, R. and Val{\'e}a, I. and Sorgho, H. and Williams, N. and Glenn, G. and Fries, L. and Reimer, J. and Ewer, K. J. and Shaligram, U. and Hill, A. V. S. and Tinto, H.},
  title = {Efficacy of a low-dose candidate malaria vaccine, {R21} in adjuvant {Matrix-M}, with seasonal administration to children in {Burkina} {Faso}: a randomised controlled trial},
  journal = {The Lancet},
  year = {2021},
  volume = {397},
  number = {10287},
  pages = {1809--1818},
  month = {May},
  doi = {10.1016/S0140-6736(21)00943-0},
  pmid = {33964223},
  pmcid = {PMC8121760}
}

@article{reyes2020,
  author    = {Reyes, M. and Filbin, M. R. and Bhattacharyya, R. P. and Billman, K. and Eisenhaure, T. and Hung, D. T. and Levy, B. D. and Baron, R. M. and Blainey, P. C. and Goldberg, M. B. and Hacohen, N.},
  title     = {An immune-cell signature of bacterial sepsis},
  journal   = {Nature Medicine},
  year      = {2020},
  volume    = {26},
  number    = {3},
  pages     = {333--340},
  doi       = {10.1038/s41591-020-0752-4},
  pmid      = {32066974}
}

@article{wong2012,
  author    = {Wong, H. R. and Salisbury, S. and Xiao, Q. and Cvijanovich, N. Z. and Hall, M. and Allen, G. L. and Thomas, N. J. and Freishtat, R. J. and Anas, N. and Meyer, K. and Checchia, P. A. and Lin, R. and Bigham, M. T. and Sen, A. and Nowak, J. and Quasney, M. and Henricksen, J. W. and Chopra, A. and Banschbach, S. and Beckman, E. and Harmon, K. and Lahni, P. and Lindsell, C. J.},
  title     = {The pediatric sepsis biomarker risk model},
  journal   = {Critical Care},
  year      = {2012},
  volume    = {16},
  number    = {5},
  pages     = {R174},
  doi       = {10.1186/cc11652},
  note      = {Note: Could not verify 2025 citation - using verified 2012 paper by Wong et al. on sepsis biomarkers}
}

@article{arnold2011,
  author    = {Arnold, B. C. and Ng, H. K. T.},
  title     = {Flexible bivariate beta distributions},
  journal   = {Journal of Multivariate Analysis},
  volume    = {102},
  number    = {8},
  pages     = {1194--1202},
  year      = {2011},
  doi       = {10.1016/j.jmva.2011.04.002}
}

@article{hartman2023,
  author    = {Hartman, E. and Scott, A. M. and Karlsson, C. and 
               Mohanty, T. and Vaara, S. T. and Linder, A. and 
               Malmström, L. and Malmström, J.},
  title     = {Interpreting biologically informed neural networks for enhanced 
               proteomic biomarker discovery and pathway analysis},
  journal   = {Nature Communications},
  year      = {2023},
  volume    = {14},
  number    = {1},
  pages     = {5359},
  doi       = {10.1038/s41467-023-41146-4},
  pmid      = {37660105},
  pmcid     = {PMC10475049},
  url       = {https://doi.org/10.1038/s41467-023-41146-4}
}

@article{metcalf2016,
  author  = {Metcalf, C. J. E. and Farrar, J. and Cutts, F. T. and Basta, N. E. and Graham, A. L. and Lessler, J. and Ferguson, N. M. and Burke, D. S. and Grenfell, B. T.},
  title   = {Use of serological surveys to generate key insights into the changing global landscape of infectious disease},
  journal = {The Lancet},
  year    = {2016},
  volume  = {388},
  number  = {10045},
  pages   = {728--730},
  doi     = {10.1016/S0140-6736(16)30164-7}
}

@article{Bousema2013,
  author    = {Bousema, T. and Griffin, J. T. and Sauerwein, R. W. and Smith, D. L. and Churcher, T. S. and Takken, W. and Ghani, A. C. and Drakeley, C. and Gosling, R.},
  title     = {The impact of hotspot-targeted interventions on malaria transmission: study protocol for a cluster-randomized controlled trial},
  journal   = {Trials},
  year      = {2013},
  volume    = {14},
  pages     = {36},
  doi       = {10.1186/1745-6215-14-36},
  url       = {https://doi.org/10.1186/1745-6215-14-36}
}

@article{Bousema2016,
  author    = {Bousema, T. and Stresman, G. and Baidjoe, A. Y. and Stevenson, J. and Omedo, I. and Osoti, V. and Macharia, A. and Ouma, P. and Yaa, P. and Jacobs, E. and Cook, J. and Kleinschmidt, I. and Thomas, M. and Drakeley, C. and Cox, J. and Alaii, J. and Odongo, W. and Laserson, K. and Kariuki, S. and Slutsker, L. and Desai, M. and Barger, B. and Tiono, A. B. and Sauerwein, R. W. and Ogutu, B. and Gosling, R.},
  title     = {The impact of hotspot-targeted interventions on malaria transmission in Rachuonyo South District in the Western Kenyan Highlands: a cluster-randomized controlled trial},
  journal   = {PLoS Medicine},
  year      = {2016},
  volume    = {13},
  number    = {4},
  pages     = {e1001993},
  doi       = {10.1371/journal.pmed.1001993},
  url       = {https://doi.org/10.1371/journal.pmed.1001993}
}

@article{wolfowitz1957,
  author  = {Wolfowitz, J.},
  title   = {The minimum distance method},
  journal = {Annals of Mathematical Statistics},
  year    = {1957},
  volume  = {28},
  number  = {1},
  pages   = {75--88},
  doi     = {10.1214/aoms/1177707046}
}

@article{beran1977,
  author  = {Beran, R.},
  title   = {Minimum Hellinger distance estimates for parametric models},
  journal = {Annals of Statistics},
  year    = {1977},
  volume  = {5},
  number  = {3},
  pages   = {445--463},
  doi     = {10.1214/aos/1176343842}
}

@article{arnold2018,
  author  = {Arnold, B. F. and Scobie, H. M. and Priest, J. W. and Lammie, P. J.},
  title   = {Integrated serologic surveillance of population immunity and disease transmission},
  journal = {Emerging Infectious Diseases},
  year    = {2018},
  volume  = {24},
  number  = {7},
  pages   = {1188--1194},
  doi     = {10.3201/eid2407.171928}
}

@article{arnold2017,
  title={Measuring changes in transmission of neglected tropical diseases, malaria, and enteric pathogens from quantitative antibody levels},
  author={Arnold, B. F. and van der Laan, M. J. and Hubbard, A. E. and Steel, C. and Kubofcik, J. and Hamlin, K. L. and Moss, D. M. and Nutman, T. B. and Priest, J. W. and Lammie, P. J.},
  journal={PLoS Neglected Tropical Diseases},
  volume={11},
  number={5},
  pages={e0005616},
  year={2017},
  publisher={Public Library of Science San Francisco, CA USA},
  doi={10.1371/journal.pntd.0005616}
}

@article{fraley2002,
  author  = {Fraley, C. and Raftery, A. E.},
  title   = {Model‐Based Clustering, Discriminant Analysis, and Density Estimation},
  journal = {Journal of the American Statistical Association},
  year    = {2002},
  volume  = {97},
  number  = {458},
  pages   = {611--631},
  doi     = {10.1198/016214502760047131}
}

@article{hardelid2008,
  author  = {Hardelid, P. and Pebody, R. G. and Andrews, N. and Miller, E.},
  title   = {Analysis of rubella antibody distribution from newborn dried blood spots using finite mixture models},
  journal = {Epidemiology and Infection},
  year    = {2008},
  volume  = {136},
  number  = {12},
  pages   = {1698--1706},
  doi     = {10.1017/S0950268808000393}
}

@article{delfava2016,
  author  = {Del Fava, E. and Rimseliene, G. and Flem, E. and Freiesleben de Blasio, B. and Scalia Tomba, G. and Manfredi, P.},
  title   = {Estimating Age-Specific Immunity and Force of Infection of Varicella Zoster Virus in Norway Using Mixture Models},
  journal = {PLOS ONE},
  year    = {2016},
  volume  = {11},
  number  = {9},
  pages   = {e0163636},
  doi     = {10.1371/journal.pone.0163636}
}

@article{azzalini1985,
  author  = {Azzalini, A.},
  title   = {A Class of Distributions Which Includes the Normal Ones},
  journal = {Scandinavian Journal of Statistics},
  year    = {1985},
  volume  = {12},
  number  = {2},
  pages   = {171--178},
  publisher = {Wiley},
  url     = {https://www.jstor.org/stable/4615982}
}

@article{akpogheneta2008,
  author    = {Akpogheneta, O. J. and Duah, N. O. and Tetteh, K. K. and Dunyo, S. and Lanar, D. E. and Pinder, M. and Conway, D. J.},
  title     = {Duration of Naturally Acquired Antibody Responses to Blood-Stage Plasmodium falciparum Antigens in Children and Adults in Northeastern Tanzania},
  journal   = {PLoS ONE},
  year      = {2008},
  volume    = {3},
  number    = {3},
  pages     = {e2399},
  doi       = {10.1371/journal.pone.0002399}
}

@article{corran2007,
  author  = {Corran, P. H. and Coleman, P. G. and Riley, E. M. and Drakeley, C. J.},
  title   = {Serology: a robust indicator of malaria transmission intensity?},
  journal = {Trends in Parasitology},
  year    = {2007},
  volume  = {23},
  number  = {12},
  pages   = {575--582},
  doi     = {10.1016/j.pt.2007.08.023}
}

@article{Drakeley2005,
  title   = {Estimating medium and long term trends in malaria transmission intensity using serological markers},
  author  = {Drakeley, C. J. and Corran, P. H. and Coleman, P. G. and Tongren, J. E. and McDonald, S. L. R. and Carneiro, I. and Malima, R. and Lusingu, J. and Manjurano, A. and Nkya, W. M. and Lemnge, M. M. and Cox, J. and Reyburn, H. and Riley, E. M.},
  journal = {Proceedings of the National Academy of Sciences},
  volume  = {102},
  number  = {14},
  pages   = {5108--5113},
  year    = {2005}
}

@article{cox2022,
  author  = {Cox, V. and O'Driscoll, M. and Imai, N. and Prayitno, A. and Hadinegoro, S. R. and Taurel, A.-F. and Coudeville, L. and Dorigatti, I.},
  title   = {Estimating dengue transmission intensity from serological data: A comparative analysis using mixture and catalytic models},
  journal = {PLOS Neglected Tropical Diseases},
  year    = {2022},
  volume  = {16},
  number  = {7},
  pages   = {e0010592},
  doi     = {10.1371/journal.pntd.0010592}
}

@article{royston2006,
  author  = {Royston, P. and Altman, D. G. and Sauerbrei, W.},
  title   = {Dichotomizing continuous predictors in multiple regression: a bad idea},
  journal = {Statistics in Medicine},
  year    = {2006},
  volume  = {25},
  number  = {1},
  pages   = {127--141},
  doi     = {10.1002/sim.2331}
}

@article{wipasa2010,
    doi = {10.1371/journal.ppat.1000770},
    author = {Wipasa, J. AND Suphavilai, C. AND Okell, L. C. AND Cook, J. AND Corran, P. H. AND Thaikla, K. AND Liewsaree, W. AND Riley, E. M. AND Hafalla, J. Clemence R.},
    journal = {PLOS Pathogens},
    publisher = {Public Library of Science},
    title = {Long-Lived Antibody and B Cell Memory Responses to the Human Malaria Parasites, Plasmodium falciparum and Plasmodium vivax},
    year = {2010},
    month = {02},
    volume = {6},
    url = {https://doi.org/10.1371/journal.ppat.1000770},
    pages = {1-15},
    number = {2},

}

@article{ondigo2014,
  author    = {Ondigo, B. N. and Park, G. S. and Ayodo, G. and Kumar, N. and Williamson, K. C. and Vulule, J. M. and John, C. C. and Hodges, J. S. and Moormann, A. M. and Kazura, J. W.},
  title     = {Estimation of Recent and Long-Term Malaria Transmission in a Population by Antibody Testing to Multiple \textit{Plasmodium falciparum} Antigens},
  journal   = {Journal of Infectious Diseases},
  year      = {2014},
  volume    = {210},
  number    = {7},
  pages     = {1123--1132},
  doi       = {10.1093/infdis/jiu225}
}

@article{Vyse2006,
  author    = {Vyse, A. J. and Gay, N. J. and Hesketh, L. M. and Pebody, R. and Morgan-Capner, P. and Miller, E.},
  title     = {Interpreting serological surveys using mixture models: the seroepidemiology of measles, mumps and rubella in England and Wales at the beginning of the 21st century},
  journal   = {Epidemiology and Infection},
  year      = {2006},
  volume    = {134},
  number    = {6},
  pages     = {1303--1312},
  doi       = {10.1017/S0950268806006340}
}

@article{Golden2016,
  author    = {Golden, A. and Faulx, D. and Kalnoky, M. and Stevens, E. and Yokobe, L. and Peck, R. and Karabou, P. and Banla, M. and Rao, R. and Adade, K. and Gantin, R. G. and Komlan, K. and Soboslay, P. T. and De Los Santos, T. and Domingo, G. J.},
  title     = {Analysis of age-dependent trends in Ov16 IgG4 seroprevalence to onchocerciasis},
  journal   = {Parasites \& Vectors},
  year      = {2016},
  volume    = {9},
  pages     = {338},
  doi       = {10.1186/s13071-016-1623-1}
}

@article{Migchelsen2017,
  author    = {Migchelsen, S. J. and Martin, D. L. and Southisombath, K. and Turyaguma, P. and Heggen, A. and Rubangakene, P. P. and Joof, H. and Makalo, P. and Cooley, G. and Gwyn, S. and Pickering, H. and Mabey, D. C. and Bailey, R. L. and Solomon, A. W. and Roberts, C. H. and Burr, S. E.},
  title     = {Defining seropositivity thresholds for use in trachoma elimination studies},
  journal   = {PLOS Neglected Tropical Diseases},
  year      = {2017},
  volume    = {11},
  number    = {1},
  pages     = {e0005230},
  doi       = {10.1371/journal.pntd.0005230}
}

@article{Sepulveda2015,
  author    = {Sep{\'u}lveda, N. and Paulino, C. D. and Drakeley, C.},
  title     = {Current mathematical models for analyzing anti-malarial antibody data with an eye to malaria elimination and eradication},
  journal   = {Journal of Immunology Research},
  year      = {2015},
  volume    = {2015},
  pages     = {738030},
  doi       = {10.1155/2015/738030}
}

@article{Katuwal2024,
  author    = {Katuwal, G. and Halstead, S. B. and Adhikari, B. and Shrestha, S. and Pandey, B. D. and Kc, R. and Banjara, M. R. and Dhimal, M. and Gurung, S. and Morita, K. and Takasaki, T. and Sherchand, J. B.},
  title     = {Hepatitis E virus in the Kathmandu Valley: Insights from a representative longitudinal serosurvey},
  journal   = {PLOS Neglected Tropical Diseases},
  year      = {2024},
  volume    = {18},
  number    = {3},
  pages     = {e0012375},
  doi       = {10.1371/journal.pntd.0012375}
}

@article{Bottomley2021,
  author    = {Bottomley, C. and Otiende, M. and Uyoga, S. and Gallagher, K. and Kagucia, E. W. and Etyang, A. O. and Mugo, D. and Gitonga, J. and Karanja, H. and Nyagwandhi, J. and Aman, R. and Ochola-Oyier, L. I. and Kasera, K. and Amoth, P. and Mwangangi, M. and Otieno, E. and Agoti, C. and Githinji, G. and Kombe, I. K. and Munywoki, P. K. and Nokes, D. J. and Barasa, E. and Tsofa, B. and Bejon, P. and Warimwe, G. M. and Scott, J. A. G. and Agweyu, A. and Williams, T. N.},
  title     = {Quantifying previous SARS-CoV-2 infection through mixture modelling of antibody levels},
  journal   = {Nature Communications},
  year      = {2021},
  volume    = {12},
  pages     = {3774},
  doi       = {10.1038/s41467-021-24061-5}
}

@article{Hay2024Serodynamics,
  title        = {Serodynamics: A primer and synthetic review of methods for epidemiological inference using serological data},
  author       = {Hay, J. A. and Routledge, I. and Takahashi, S.},
  journal      = {Epidemics},
  volume       = {47},
  pages        = {100806},
  year         = {2024},
  publisher    = {Elsevier},
  doi          = {10.1016/j.epidem.2024.100806},
  pmid         = {39647462}
}

@article{yman2016,
  author    = {Yman, V. and White, M. T. and Rono, J. and Arca, B. and Osier, F. and Ch{\^e}ne, A. and Modin-Herrero, L. and Guiyedi, V. and Mturi, N. and Marsh, K. and Beeson, J. G. and Cook, J. and Drakeley, C. and Riley, E. M. and Ronca, R. and Renia, L. and Warimwe, G. M. and Bull, P. C. and F{\"a}rnert, A.},
  title     = {Antibody acquisition models: A new tool for serological surveillance of malaria transmission intensity},
  journal   = {Scientific Reports},
  volume    = {6},
  pages     = {19472},
  year      = {2016},
  doi       = {10.1038/srep19472},
  url       = {https://doi.org/10.1038/srep19472}
}

@article{dias_domingues2024,
  author  = {Dias Domingues, T. and
             Mouri{\~n}o, H. and
             Sep{\'u}lveda, N.},
  title   = {Analysis of Antibody Data Using Skew-Normal and Skew-T Mixture Models},
  journal = {REVSTAT -- Statistical Journal},
  year    = {2024},
  volume  = {22},
  number  = {1},
  pages   = {111--132},
  doi     = {10.57805/revstat.v22i1.455},
  url     = {https://doi.org/10.57805/revstat.v22i1.455}
}

@article{Arnold2014,
  title = {Serological measures of malaria transmission in Haiti: comparison of longitudinal and cross-sectional methods},
  author = {Arnold, B. F. and Priest, J. W. and Hamlin, K. L. and Moss, D. M. and Colford, J. M. Jr. and Lammie, P. J.},
  journal = {PLOS ONE},
  year = {2014},
  volume = {9},
  number = {4},
  pages = {e93684},
  doi = {10.1371/journal.pone.0093684}
}

\appendix

\section{Proof of Theorem \ref{lem:L_2-consistency}}
\label{sec:proof-theorem}

\begin{proof}
Let $f_{\vartheta, n} :=\; f_{n}(\cdot ; \vartheta)$. We apply Corollary 3.2.3 of \cite{vandervaartwellner1996}. This corollary establishes consistency provided that: (i) the empirical criterion converges uniformly to the population criterion, i.e., $\sup_{\vartheta \in \vartheta} |\mathbb{M}_n(\vartheta) - \mathbb{M}(\vartheta)| \xrightarrow{p} 0$.  and  that (ii) the population maximum is well-separated. Assumption \ref{asm:ident} directly gives condition (ii). Thus, it remains only to show condition (i).

We prove this condition in two step.
We first establish that the unparameterized histogram $\widehat{f}_n$ converges to the true density $f_0$ in the $L_2$ norm. \cite{freedman1981histogram} provide the exact mean integrated squared error (MISE) decomposition in their Proposition (1.10):
\begin{equation*}
    \mathbb{E}\Bigl[ \|\widehat{f}_n - f_0\|_2^2 \Bigr] \;=\; \left( \frac{1}{n h_n} - \frac{1}{n} \|f_{0,n}\|_2^2 \right) \;+\; \|f_{0,n} - f_0\|_2^2,
\end{equation*}
where $f_{0,n}$ is the theoretical discretized density (the expectation of $\widehat{f}_n$). 
The first term is bounded by $(n h_n)^{-1}$ and under Assumption \ref{asm:histogram}, this term vanishes.
The second term corresponds to the approximation error of the binned density; as shown in Eq.~(2.5) of \cite{freedman1981histogram}, this vanishes as $h_n \to 0$ for any $f_0 \in L_2$.
Therefore, $\mathbb{E}[\|\widehat{f}_n - f_0\|_2^2] \to 0$, and by Markov's inequality:
\begin{equation}
\label{eq:hist-conv}
    \|\widehat{f}_n - f_0\|_2 \;\xrightarrow{p}\; 0.
\end{equation}

Using the first step we establish Uniform convergence of the criterion.
Note that $|\mathbb{M}_n(\vartheta) - \mathbb{M}(\vartheta)| = | \|\widehat{f}_n - f_{\vartheta,n}\|_2^2 - \|f_0 - f(\cdot;\vartheta)\|_2^2 |$.
We decompose the difference as:
\begin{align*}
    |\mathbb{M}_n(\vartheta) - \mathbb{M}(\vartheta)| \;&\le\; \Bigl| \|\widehat{f}_n - f_{\vartheta,n}\|_2^2 - \|f_0 - f_{\vartheta,n}\|_2^2 \Bigr|  \\
    \;&\quad +\; \Bigl| \|f_0 - f_{\vartheta,n}\|_2^2 - \|f_0 - f(\cdot;\vartheta)\|_2^2 \Bigr|.
\end{align*}
We prove separately that the two terms on the right goes to zero.
By the triangle inequality for norms, $| \|a\|_2^2 - \|b\|_2^2 | \le \|a - b\|_2 (\|a\|_2 + \|b\|_2)$. Applying the inequality to the first term (with $a = \widehat{f}_n - f_{\vartheta,n}$ and $b = f_0 - f_{\vartheta,n}$):
\begin{equation*}
    \Bigl| \|\widehat{f}_n - f_{\vartheta,n}\|_2^2 - \|f_0 - f_{\vartheta,n}\|_2^2 \Bigr| \;\le\; \|\widehat{f}_n - f_0\|_2 \left( \|\widehat{f}_n - f_{\vartheta,n}\|_2 + \|f_0 - f_{\vartheta,n}\|_2 \right).
\end{equation*}
 By Assumption \ref{asm:regularity}, the terms in the parenthesis are bounded uniformly in probability. Thus by \eqref{eq:hist-conv}, $ \Bigl| \|\widehat{f}_n - f_{\vartheta,n}\|_2^2 - \|f_0 - f_{\vartheta,n}\|_2^2 \Bigr| $ converges to 0 uniformly in $\vartheta$.

Applying the inequality to the second term:
\begin{equation*}
   \Bigl| \|f_0 - f_{\vartheta,n}\|_2^2 - \|f_0 - f(\cdot;\vartheta)\|_2^2 \Bigr|\;\le\; \|f_{\vartheta,n} - f(\cdot;\vartheta)\|_2 \left( \|f_0 - f_{\vartheta,n}\|_2 + \|f_0 - f(\cdot;\vartheta)\|_2 \right).
\end{equation*}
Again, following Assumption \ref{asm:regularity} and  \eqref{eq:hist-conv}, the terms in parentheses are bounded uniformly and bounded in probability (note that if two terms tend to zero in probability, so does their product). \cite{freedman1981histogram} show in Proposition (2.7) that when the density is absolutely continuous with square-integrable derivative, then $\|f_{\vartheta,n} - f(\cdot;\vartheta)\|^2_2 $  is $O(h_n^2)$. Specifically, for small $h_n$, the term is bounded by a constant times $h_n^2 \|f'(\cdot;\vartheta)\|_2^2$.
Thus under Assumption \ref{asm:smoothness}, $\sup_{\vartheta \in \vartheta} \|f'(\cdot;\vartheta)\|_2 < \infty$:
\begin{equation*}
    \sup_{\vartheta \in \vartheta} \|f_{\vartheta,n} - f(\cdot;\vartheta)\|_2 \;\le\; C h_n \sup_{\vartheta \in \vartheta} \|f'(\cdot;\vartheta)\|_2 \;\longrightarrow\; 0.
\end{equation*}
Thus, $\sup_{\vartheta \in \vartheta} |\mathbb{M}_n(\vartheta) - \mathbb{M}(\vartheta)| \xrightarrow{p} 0$.

\end{proof}

\section{Verification of regularity conditions for the latent variable model}
\label{subsec:regularity-verification}

To apply the consistency result of Theorem \ref{lem:L_2-consistency} to our proposed latent variable model, we must verify that the model defined in \eqref{eq:lbg} satisfies Assumptions \textbf{A1}--\textbf{A5}.

\begin{lemma}
\label{lem:mixture-regularity}
Consider the latent variable model 
\begin{equation*}
    f(y;\vartheta) \;=\; \int_0^1 \phi(y; \mu(t), \sigma^2(t)) \, g_T(t; \psi) \, dt,
\end{equation*}
where $\mu(t) = (1-t)\mu_0 + t\mu_1$ and $\sigma^2(t) = (1-t)\sigma_0^2 + t\sigma_1^2$. Assume the parameter space $\vartheta$ is compact (\textbf{A1}) and satisfies the ordering constraint $\mu_0 < \mu_1$ to avoid label switching. Furthermore, assume $\sigma_0, \sigma_1 \ge c$ for some constant $c > 0$ to ensure the density is bounded. Let $g_T$ be identifiable and continous with respect  $\psi$. a If the histogram discretization satisfies \textbf{A4}, then the estimator satisfies Assumptions \textbf{A1}--\textbf{A5}.
\end{lemma}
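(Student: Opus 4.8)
The plan is to verify the five assumptions one at a time, since \textbf{A1} (compactness) and \textbf{A4} (histogram regime) are assumed outright in the lemma and need no work. The substantive content is in \textbf{A2}, \textbf{A3}, and \textbf{A5}, and the natural order is to dispatch the analytic regularity conditions \textbf{A3} and \textbf{A5} first, then handle identification \textbf{A2}, which is the delicate one.

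For \textbf{A3}, I would argue that the lower bound $\sigma_0,\sigma_1 \ge c > 0$ forces $\sigma^2(t) = (1-t)\sigma_0^2 + t\sigma_1^2 \ge c^2$ for all $t \in [0,1]$, so each Gaussian kernel $\phi(y;\mu(t),\sigma^2(t))$ is uniformly bounded by $(2\pi c^2)^{-1/2}$ and, being a convex combination of Gaussians with means lying in the compact interval $[\mu_0,\mu_1]$ and variances in $[c^2, \max(\sigma_0^2,\sigma_1^2)]$, has a Gaussian-type tail that is uniform over $\vartheta$ in the compact parameter set. Hence $f(\cdot;\vartheta) \in L_2$ with $\sup_{\vartheta} \|f(\cdot;\vartheta)\|_2 < \infty$; measurability of $(y,\vartheta) \mapsto f(y;\vartheta)$ follows from continuity of the integrand in $(y,\vartheta)$ together with the assumed continuity of $g_T$ in $\psi$ and dominated convergence. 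Since $f_0 = f(\cdot;\vartheta_0)$ lies in this family, $f_0 \in L_2$. For \textbf{A5}, differentiating under the integral sign (justified by the uniform domination just established) gives $f'(y;\vartheta) = \int_0^1 \partial_y \phi(y;\mu(t),\sigma^2(t))\, g_T(t;\psi)\,dt$, and $\partial_y \phi$ is again a Gaussian times a linear factor $(y-\mu(t))/\sigma^2(t)$, which is square-integrable in $y$ with an $L_2$ norm controlled by $c$ and the range of the means; taking a supremum over the compact $\vartheta$ yields $\sup_\vartheta \|f'(\cdot;\vartheta)\|_2 < \infty$. Absolute continuity of $y \mapsto f(y;\vartheta)$ is immediate from the smoothness of the Gaussian kernel.

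The heart of the argument is \textbf{A2}, the well-separated identification condition. Here I would first observe that, because $\mathbb{M}(\vartheta) = -\|f_0 - f(\cdot;\vartheta)\|_2^2 \le 0$ with equality iff $f(\cdot;\vartheta) = f_0$ almost everywhere, the well-separated maximum condition is equivalent to: (a) the model is identifiable in the sense that $f(\cdot;\vartheta) = f(\cdot;\vartheta_0)$ a.e.\ implies $\vartheta = \vartheta_0$, and (b) the map $\vartheta \mapsto \|f_0 - f(\cdot;\vartheta)\|_2^2$ is continuous on the compact set $\vartheta$, so that its unique zero is automatically well-separated (a continuous function on a compact set that has a unique minimizer attains a strictly larger value on the complement of any open neighborhood of it). Continuity of $\vartheta \mapsto f(\cdot;\vartheta)$ in $L_2$ follows from the continuity of the integrand in $\vartheta$ (using continuity of $g_T$ in $\psi$ and of $\mu(t),\sigma^2(t)$ in $(\mu_0,\mu_1,\sigma_0,\sigma_1)$) plus uniform domination and dominated convergence, exactly as in \textbf{A3}. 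For identifiability itself, I would decompose it: the conditional Gaussian scale–location mixture $f(y;\vartheta) = \int \phi(y;\mu(t),\sigma^2(t))\,g_T(t;\psi)\,dt$ is a mixture of Gaussians whose mean–variance pairs trace out the line segment $\{((1-t)\mu_0+t\mu_1,(1-t)\sigma_0^2+t\sigma_1^2): t\in[0,1]\}$; recovering $(\mu_0,\mu_1,\sigma_0^2,\sigma_1^2)$ amounts to recovering the two endpoints of this segment (fixed by the ordering constraint $\mu_0<\mu_1$), and the law of $T$ is then recovered as the mixing measure, which is identified by the assumed identifiability of $g_T$ in $\psi$ combined with the well-known identifiability of finite-dimensional location–scale Gaussian mixtures.

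I expect the identifiability step to be the main obstacle, and it is worth being candid about where the real difficulty lies: one must rule out the possibility that two different latent distributions $g_T(\cdot;\psi)$ and $g_T(\cdot;\psi')$, possibly paired with different boundary parameters, induce the same marginal $f(\cdot;\vartheta)$. The cleanest route is to invoke the identifiability of location–scale Gaussian mixtures (e.g.\ via analyticity of characteristic functions, since the mixing measure on the line segment is determined by $f$), which pins down both the segment endpoints and the pushforward of $g_T$ onto that segment; the parametrization of $g_T$ by $\psi$ being injective then finishes the argument. If $g_T$ is only assumed continuous and identifiable (as stated) rather than, say, having a density or finitely many parameters, one should check that the induced mixing measure on the segment determines $\psi$ — this is exactly the ``$g_T$ identifiable'' hypothesis, so no extra work is needed beyond translating it. Once identifiability and $L_2$-continuity of $\vartheta \mapsto f(\cdot;\vartheta)$ are in hand, \textbf{A2} follows from the compactness in \textbf{A1} by the standard argument, and the proof is complete.
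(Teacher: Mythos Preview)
Your proposal is correct and follows essentially the same approach as the paper: verify \textbf{A1}--\textbf{A5} one by one, use the variance lower bound $\sigma(t)\ge c$ to get the uniform density bound for \textbf{A3} and the uniform $L_2$ bound on $\partial_y\phi$ for \textbf{A5}, and derive \textbf{A2} from injectivity plus continuity of $\vartheta\mapsto f(\cdot;\vartheta)$ plus compactness (the paper phrases the well-separation step via the Extreme Value Theorem, which is exactly your ``continuous function on a compact set with a unique minimizer'' argument). If anything, your treatment of identifiability---recovering the segment endpoints and then the mixing measure via identifiability of Gaussian location--scale mixtures---is more explicit than the paper's, which simply asserts injectivity from the identifiability of $g_T$ together with the monotone mapping $t\mapsto(\mu(t),\sigma^2(t))$.
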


\begin{proof}
We verify each assumption separately:
\begin{enumerate}[label=\textbf{A\arabic*.}, leftmargin=3em, itemsep=0.5ex]
    \item \textbf{Compactness.} Satisfied by the definition of the parameter space $\vartheta$.

\item \textbf{Identification.} 
The condition requiring a well-separated maximum is satisfied by the combination of model identifiability, continuity, and compactness.

First, the mapping $\vartheta \mapsto f(\cdot; \vartheta)$ is injective, ensuring that the population criterion $\mathbb{M}(\vartheta)$ has a unique maximizer $\vartheta_0$. This injectivity holds because the $g_T(t; \psi)$ is identifiable, and the ordering constraint $\mu_0 < \mu_1$ ensures a strictly monotonic (bijective) mapping from the latent variable to the conditional moments.

Second, the criterion $\mathbb{M}(\vartheta)$ is continuous on $\vartheta$ because both the Gaussian kernel $\phi$ and the mixing density $g_T$ are continuous functions of the parameters. To see that the maximum is well-separated, consider any open neighborhood $G$ of $\vartheta_0$. The set $K = \vartheta \setminus G$ is a closed subset of a compact space, and thus is itself compact. By the Extreme Value Theorem, the continuous function $\mathbb{M}$ must attain a maximum on $K$. Since $\vartheta_0$ is the \emph{unique} maximizer and $\vartheta_0 \notin K$, this attained maximum must be strictly less than $\mathbb{M}(\vartheta_0)$:
\begin{equation*}
    \sup_{\vartheta \in \vartheta \setminus G} \mathbb{M}(\vartheta) \;=\; \max_{\vartheta \in K} \mathbb{M}(\vartheta) \;<\; \mathbb{M}(\vartheta_0).
\end{equation*}
    
    \item \textbf{Boundedness.} We must show $f(y;\vartheta)$ is uniformly bounded. The Gaussian kernel $\phi(y; \mu, \sigma^2)$ is bounded by $(2\pi\sigma^2)^{-1/2}$. By the assumption that component variances are bounded away from zero ($\sigma_0, \sigma_1 \ge c$), the interpolated variance satisfies $\sigma(t) \ge c$ for all $t \in [0,1]$. Thus,
    $$
    f(y;\vartheta) \;=\; \int_0^1 \phi(y; \mu(t), \sigma^2(t)) g_T(t; \psi) \, dt \;\le\; \frac{1}{\sqrt{2\pi}c} \int_0^1 g_T(t; \psi) \, dt \;=\; \frac{1}{\sqrt{2\pi}c}.
    $$
    Since $f$ is bounded by a constant $K$ and $\int f(y) dy = 1$, it follows that $\|f\|_2^2 \le K \int f(y) dy = K < \infty$, so $f \in L_2$.

    \item \textbf{Histogram Regime.} This property can be easily satisfied by choosing any histogram binning rule for which $h_n \to 0$ and $n h_n \to \infty$, such as the Sturges rule adopted in this study with $h_n = (y_{(n)}-y_{(1)})/(1+\log_2 n)$.

    \item \textbf{Smoothness.} We require the derivative $f'(y;\vartheta)$ to be in $L_2$. Because $\sigma(t) \ge c > 0$, the Gaussian kernel is smooth with bounded derivatives. We differentiate under the integral sign:
    $$
    f'(y;\vartheta) \;=\; \int_0^1 \frac{\partial}{\partial y} \phi(y; \mu(t), \sigma^2(t)) \, g_T(t; \psi) \, dt.
    $$
    Recall that $\frac{\partial}{\partial y} \phi(y; \mu, \sigma^2) = -\frac{y-\mu}{\sigma^2}\phi(y)$.
    The $L_2$ norm of this derivative for a single Gaussian component is proportional to $\sigma^{-3/2}$. Since $\sigma(t) \ge c$, there exists a uniform bound $M < \infty$ such that $\|\phi'(\cdot|t)\|_2 \le M$ for all $t \in [0,1]$. By Minkowski's inequality for integrals:
    $$
    \|f'(\cdot;\vartheta)\|_2 \;\le\; \int_0^1 \|\phi'(\cdot|t) g_T(t; \psi)\|_2 \, dt \;=\; \int_0^1 g_T(t; \psi) \|\phi'(\cdot|t)\|_2 \, dt.
    $$
    Using the uniform bound $M$:
    $$
    \|f'(\cdot;\vartheta)\|_2 \;\le\; \int_0^1 g_T(t; \psi) M \, dt \;=\; M.
    $$
    Thus, $\sup_{\vartheta \in \vartheta} \|f'(\cdot;\vartheta)\|_2 < \infty$, satisfying the smoothness condition.
\end{enumerate}
\end{proof}

\section{Additional plots}

\begin{figure}[H]
    \centering
    \includegraphics[width=0.8\linewidth]{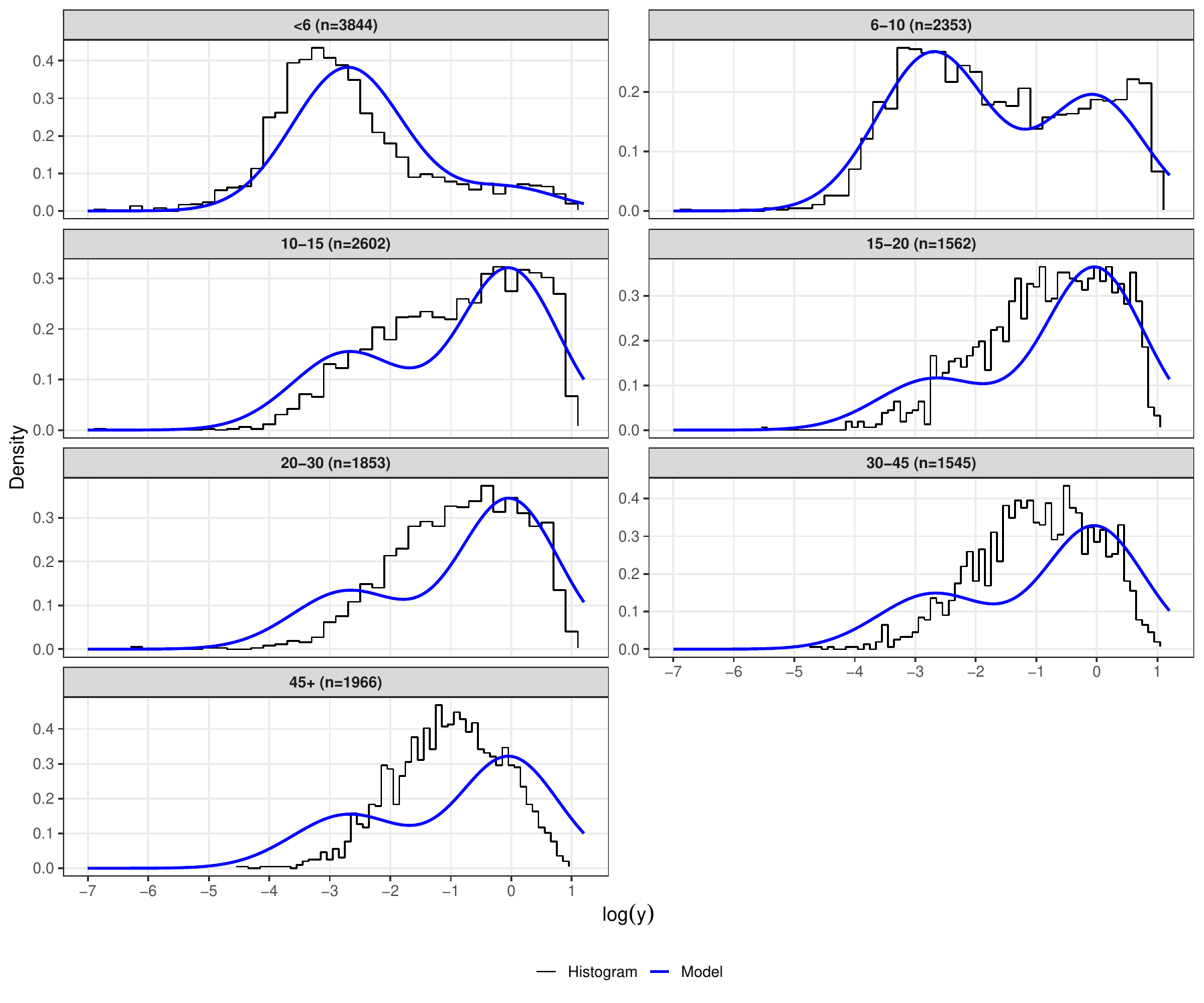}
    \caption{Apical membrance antigen 1 (AMA1): Histograms and fitted Gaussian mixture models for each age group shown in the panels. The means and variances of the two Gaussian components are estimated using data from the age group $<6$ years and are then held fixed for the remaining age groups, while the mixing probabilities are re-estimated.}
    \label{fig:compare_gmm_ama}
\end{figure}

\begin{figure}[H]
    \centering
    \includegraphics[width=0.8\linewidth]{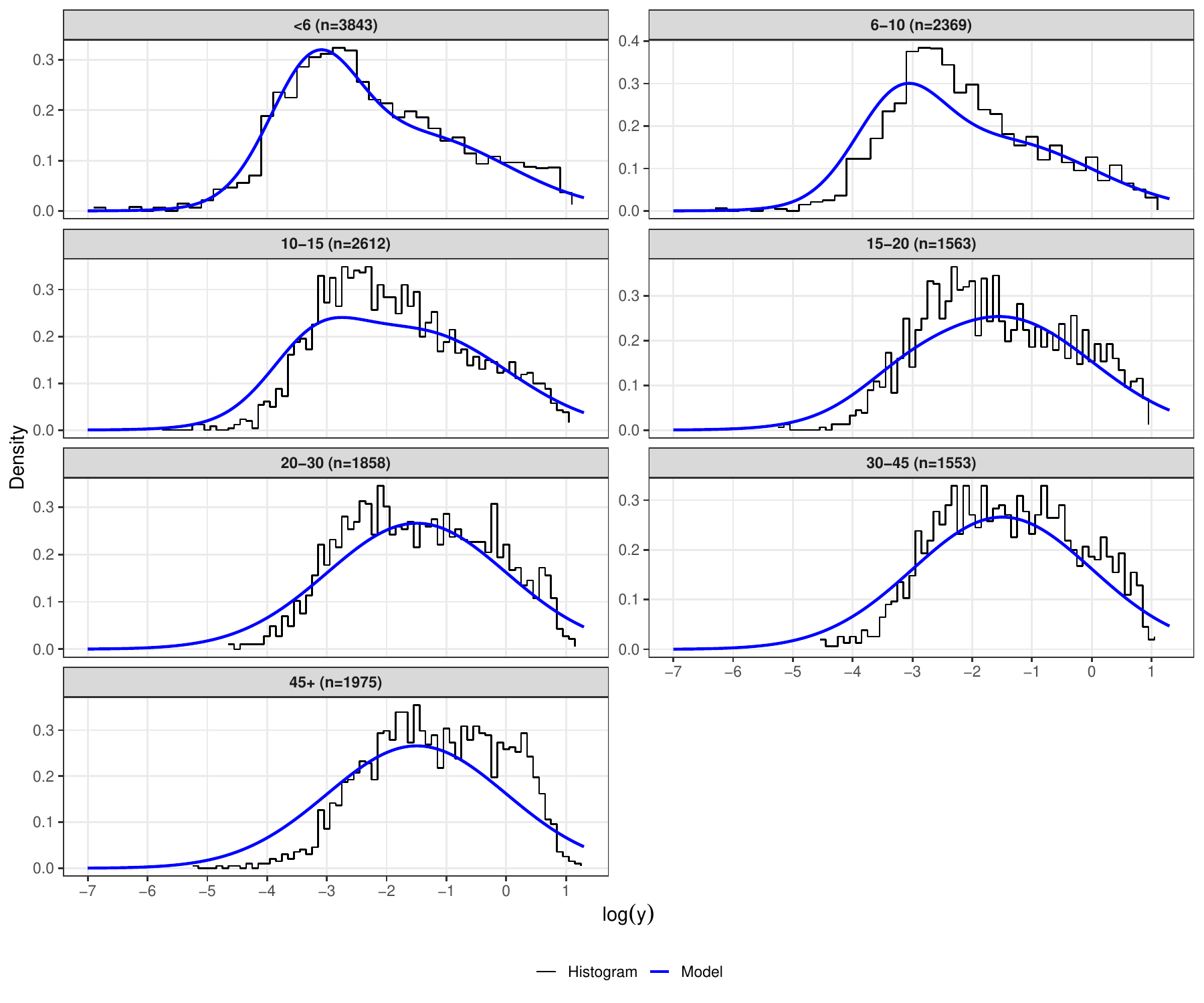}
    \caption{Merozoite surface protein 1 (MSP1): Histograms and fitted Gaussian mixture models for each age group shown in the panels. The means and variances of the two Gaussian components are estimated using data from the age group $<6$ years and are then held fixed for the remaining age groups, while the mixing probabilities are re-estimated.}
    \label{fig:compare_gmm_msp}
\end{figure}

\end{document}